\renewcommand{\epsilon}{\varepsilon}
\renewcommand{\phi}{\varphi}
\newcommand{\ie}{i.\,e.\xspace}
\newcommand{\eg}{e.\,g.\xspace}
\newcommand{\Amc}{\ensuremath{\mathcal{A}}\xspace}
\newcommand{\Bmc}{\ensuremath{\mathcal{B}}\xspace}
\newcommand{\Fmc}{\ensuremath{\mathcal{F}}\xspace}
\newcommand{\Kmc}{\ensuremath{\mathcal{K}}\xspace}
\newcommand{\Lmc}{\ensuremath{\mathcal{L}}\xspace}
\newcommand{\Tmc}{\ensuremath{\mathcal{T}}\xspace}
\newcommand{\Nbb}{\ensuremath{\mathbb{N}}\xspace}
\newcommand{\ebf}{\ensuremath{\mathbf{e}}\xspace}
\newcommand{\ubf}{\ensuremath{\mathbf{u}}\xspace}
\newcommand{\vbf}{\ensuremath{\mathbf{v}}\xspace}
\newcommand{\0}{\ensuremath{\mathbf{0}}\xspace}
\newcommand{\Inf}{\ensuremath{\mathsf{Inf}}}
\newcommand{\LPPBA}{\ensuremath{\Lmc_\mathsf{PPBA}}\xspace}
\newcommand{\LSPBA}{\ensuremath{\Lmc_\mathsf{SPBA}}\xspace}
\newcommand{\LRPA}{\ensuremath{\Lmc_\mathsf{RPA}}\xspace}
\newcommand{\LRPBA}{\ensuremath{\Lmc_\mathsf{RPBA}}\xspace}
\newcommand{\LLPBA}{\ensuremath{\Lmc_\mathsf{LPBA}}\xspace}
\newcommand{\LPAPA}{\ensuremath{\Lmc_\mathsf{PA, PA}^\omega}\xspace}
\newcommand{\LPAReg}{\ensuremath{\Lmc_\mathsf{PA, Reg}^\omega}\xspace}
\newcommand{\LRegPA}{\ensuremath{\Lmc_\mathsf{Reg, PA}^\omega}\xspace}
\newcommand{\LRegReg}{\ensuremath{\Lmc_\mathsf{Reg, Reg}^\omega}\xspace}
\theoremstyle{remark}
\newtheorem{theorem}{Theorem}[section]
\newtheorem{corollary}{Corollary}[section]
\newtheorem{lemma}{Lemma}[section]
\newtheorem{remark}{Remark}[section]
\newtheorem{observation}{Observation}[section]
\newtheorem{example}{Schnexample}
\crefname{corollary}{Corollary}{Corollaries}
\crefname{lemma}{Lemma}{Lemmas}
\crefname{section}{Section}{Sections}
\journal{arXiv}
\begin{document}
\begin{frontmatter}

\title{Büchi-like characterizations for\\ Parikh-recognizable omega-languages}
\author{Mario Grobler}
\ead{grobler@uni-bremen.de}
\address{University of Bremen, Bremen, Germany}
\author{Sebastian Siebertz}\address{University of Bremen, Bremen, Germany}\ead{siebertz@uni-bremen.de}
\begin{keyword}
  Automata theory, Parikh automata, infinite words, Büchi's theorem
\end{keyword}

\begin{abstract}
Büchi's theorem states that $\omega$-regular languages are characterized as languages of the form $\bigcup_i U_i V_i^\omega$, where $U_i$ and $V_i$ are regular languages.
Parikh automata are automata on finite words whose transitions are equipped with vectors of positive integers, whose sum can be tested for membership in a given semi-linear set.
%Just recently, several models for Parikh automata on infinite words were introduced.
We give an intuitive automata theoretic characterization of languages of the form $U_i V_i^\omega$, where~$U_i$ and $V_i$ are Parikh-recognizable.
Furthermore, we show that the class of such languages, where~$U_i$ is Parikh-recognizable and $V_i$ is regular is exactly captured by a model proposed by Klaedtke and Ruess [Automata, Languages and Programming, 2003], which again is equivalent to (a small modification of) reachability Parikh automata introduced by Guha et al.~[FSTTCS, 2022].
We finish this study by introducing a model that captures exactly such languages for regular $U_i$ and Parikh-recognizable $V_i$.
\end{abstract}
\end{frontmatter}

\section{Introduction}

In his groundbreaking work~\cite{buechi} from 1960 Büchi initiated the study of $\omega$-regular languages and introduced Büchi automata. 
By his famous theorem $\omega$-regular languages are characterized as languages of the form $\bigcup_i U_i V_i^\omega$, where the $U_i$ and $V_i$ are regular languages. 
One shortcoming of Büchi automata, and also of many more powerful models, is that they cannot count. 
For example, the language $\{a^nb^nc^n \mid n \in \mathbb{N}\}^\omega$ is not $\omega$-regular, and not even $\omega$-context-free.
This shortcoming led to the study of automata on infinite words with counters, see e.g.~\cite{allredCounting, bojanczykbeyond, beyondomegabs}. 

Parikh automata (PA) are another model of automata (on finite words) with counters~\cite{klaedtkeruess}. 
A PA with~$d$~counters is a non-deterministic finite automaton that is additionally equipped with a semi-linear set~$C$. 
Furthermore, every transition is equipped with a $d$-tuple of non-negative integers and every time a transition is used, the counters are incremented by the values in the tuple accordingly. 
A finite input word is accepted if the PA ends in a final state and additionally, the resulting $d$-tuple in the counters lies in~$C$. The class of languages recognized by PA contains all regular languages, and even some, but not all, context-sensitive languages, e.g.\ the language $\{a^nb^nc^n \mid n \in \mathbb{N}\}$.

Recently, several possible extensions of Parikh automata on infinite words were proposed and studied by Grobler et al.~\cite{infiniteOurs} and Guha et al.~\cite{infiniteZimmermann}. 
In fact, it turns out that one of the models proposed in~\cite{infiniteOurs} is equivalent to synchronous blind counter machines, which were introduced by Fernau and Stiebe~\cite{blindcounter}. 
Fernau and Stiebe also considered the class of all $\omega$-languages of the form $\bigcup_iU_iV_i^\omega$, where the $U_i,V_i$ are Parikh-recognizable languages of finite words. 
They called this class $\Kmc_*$ and proved that the class of \mbox{$\omega$-languages} recognized by blind counter machines is a proper subset of $\Kmc_*$. 

\smallskip
In the light of Büchi's famous theorem it is a natural question to find an automata theoretic characterization of the class $\Kmc_*$. In fact, more generally, we define the four classes $\LRegReg$, $\LPAReg$, $\LRegPA$ and $\LPAPA$ of $\omega$-languages of the form $\bigcup_iU_iV_i^\omega$, where the $U_i,V_i$ are regular or Parikh-recognizable languages of finite words, respectively. By Büchi's theorem the class $\LRegReg$ is the class of $\omega$-regular languages. 
In this work we provide automata theoretic characterizations of the other three classes. 

\smallskip
We first introduce the new model of \emph{limit Parikh Büchi automata} (LPBA), which was suggested in the concluding remarks of Klaedtke and Ruess~\cite{klaedtkeruess}. 
An LPBA accepts an infinite word if an accepting state is visited infinitely often (satisfies the Büchi condition) and the infinite sum of the counters belongs to the semi-linear set, which for this purpose is extended with the symbol $\infty$ if the sum of some counter diverges (satisfies the newly introduced \emph{limit Parikh condition}). 

We also introduce a new model, which is obtained by a small modification of reachability Parikh automata as introduced by Guha et al.~\cite{infiniteZimmermann}, that we call \emph{reachability Parikh Büchi automata} 
(RPBA). An RPBA accepts an infinite word if an 
accepting state is visited infinitely often (satisfies the Büchi condition) and satisfies the Parikh condition \emph{once}. 

Quite surprisingly, both models turn out to capture exactly the class $\LPAReg$, and hence are equivalent. 

\smallskip
We then study \emph{strong reset Parikh automata} (SPBA), which were introduced by Grobler et al.~\cite{infiniteOurs}. We consider the automata as directed graphs and provide two graph theoretic definitions of subclasses of SPBA that exactly capture the classes $\LRegPA$ and~$\LPAPA$. These definitions are based on an analysis of the strongly connected components of the underlying graph, where the accepting states can be found and how they are connected to the rest of the graph. 

\smallskip
We believe that our results provide interesting insights into the theory of Parikh-recognizable $\omega$-languages. 
It remains an interesting open question to characterize the new classes of $\omega$-languages by logics. 

\section{Preliminaries}
\label{sec:prelims}

\subsection{Finite and infinite words}
We write $\Nbb$ for the set of non-negative integers including $0$. Let $\Sigma$ be an alphabet, \ie, a finite non-empty set and let $\Sigma^*$ be the set of all finite words over $\Sigma$. 
For a word $w \in \Sigma^*$, we denote by $|w|$ the length of $w$, and by $|w|_a$ the number of occurrences of the letter $a \in \Sigma$ in $w$. 
We write $\varepsilon$ for the empty word of length~$0$.

An \emph{infinite word} over an alphabet $\Sigma$ is a function $\alpha : \Nbb \setminus \{0\} \rightarrow \Sigma$. We often write~$\alpha_i$ instead of~$\alpha(i)$. 
Thus, we can understand an infinite word as an infinite sequence of symbols $\alpha = \alpha_1\alpha_2\alpha_3\ldots$ For $m \leq n$, we abbreviate the finite infix $\alpha_m \ldots \alpha_n$ by $\alpha[m,n]$. 
%Let $w$ be a (finite) word of length~$n$. Then $w^\omega$ denotes the infinite concatenation of $w$, that is, $w^\omega = \alpha$ where $\alpha[i \cdot n + 1, (i+1) \cdot n] = w$ for all $i \geq 0$. 
We denote by $\Sigma^\omega$ the set of all infinite words over $\Sigma$. 
We call a subset $L \subseteq \Sigma^\omega$ an \emph{$\omega$-language}. 
%Since we mainly deal with $\omega$-languages, we will simply speak of languages.
Moreover, for $L \subseteq \Sigma^*$, we define $L^\omega = \{w_1w_2\dots \mid w_i \in L \setminus \{\varepsilon\}\} \subseteq \Sigma^\omega$.

\subsection{Regular and $\omega$-regular languages}

A \emph{Nondeterministic Finite Automaton} (NFA) is a tuple $\Amc = (Q, \Sigma, q_0, \Delta, F)$, where~$Q$ is the finite set of states, $\Sigma$ is the input alphabet, $q_0 \in Q$ is the initial state, $\Delta \subseteq Q \times \Sigma \times Q$ is the set of transitions and $F \subseteq Q$ is the set of accepting states. 
A \emph{run} of $\Amc$ on a word $w = w_1 \ldots w_n\in \Sigma^*$ is a (possibly empty) sequence of transitions $r = r_1 \ldots r_n$ with $r_i = (p_{i-1}, w_i, p_i)\in \Delta$ such that $p_0=q_0$. 
We say $r$ is \emph{accepting} if $p_n \in F$. The empty run on~$\epsilon$ is accepting if $q_0 \in F$. We define the \emph{language recognized by \Amc} as $L(\Amc) = \{w \in \Sigma^* \mid \text{there is an accepting run of $\Amc$ on $w$}\}$. If a language $L$ is recognized by some NFA $\Amc$, we call $L$ \emph{regular}.

A \emph{Büchi Automaton} (BA) is an NFA $\Amc = (Q, \Sigma, q_0, \Delta, F)$ that takes infinite words as input. 
A \emph{run} of $\Amc$ on an infinite word $\alpha_1\alpha_2\alpha_3\dots$ is an infinite sequence of transitions $r = r_1 r_2 r_3 \dots$ with $r_i = (p_{i-1}, \alpha_i, p_i) \in \Delta$ such that $p_0=q_0$. 
We say $r$ is \emph{accepting} if there are infinitely many~$i$ such that $p_i \in F$. 
We define the \emph{$\omega$-language recognized by~$\Amc$} as $L_\omega(\Amc) = \{\alpha \in \Sigma^\omega \mid \text{there is an accepting run of $\Amc$ on $\alpha$}\}$. 
If an $\omega$-language $L$ is recognized by some BA $\Amc$, we call $L$ \emph{$\omega$-regular}. Büchi's theorem establishes an important connection between regular and $\omega$-regular languages:
\begin{theorem}[Büchi]
\label{thm:buechi}
A language $L \subseteq \Sigma^\omega$ is $\omega$-regular if and only if there are regular languages $U_1, V_1, \dots, U_n, V_n \subseteq \Sigma^*$ for some $n \geq 1$ such that $L = U_1V_1^\omega \cup \dots \cup U_nV_n^\omega$.
\end{theorem}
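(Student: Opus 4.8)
The plan is to prove the two directions separately, starting with the easier one. For the ``if'' direction, suppose $L = U_1V_1^\omega \cup \dots \cup U_nV_n^\omega$ with each $U_i, V_i$ regular. Since $\omega$-regular languages are closed under finite union (one can take a disjoint union of Büchi automata with a fresh initial state, or simply nondeterministically guess which $i$ to use), it suffices to show that $UV^\omega$ is $\omega$-regular whenever $U, V$ are regular. Given NFAs $\Amc_U$ and $\Amc_V$ for $U$ and $V$, I would first modify $\Amc_V$ so that it has a single initial state that is also its only accepting state and has no incoming transitions other than the implicit ``loop back'' — concretely, redirect every transition entering an accepting state of $\Amc_V$ to instead (also) enter the initial state, making this state accepting. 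This yields a BA recognizing $V^\omega$ by the usual argument: an accepting run visits the initial/accepting state infinitely often, which carves the input into infinitely many nonempty blocks each in $V$. Then I prepend $\Amc_U$ by gluing its accepting states to the initial state of the $V^\omega$ automaton via $\epsilon$-moves (or by the standard product-free concatenation construction), and take the result to be Büchi-accepting on the copy of the $V^\omega$-automaton's accepting state. Care is needed with the empty-word subtleties, which is exactly why the definition restricts $L^\omega$ to concatenations of words from $L \setminus \{\epsilon\}$.

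For the ``only if'' direction, let $\Amc = (Q, \Sigma, q_0, \Delta, F)$ be a BA with $L_\omega(\Amc) = L$. For states $p, q \in Q$ define $W_{p,q} = \{w \in \Sigma^* : \Amc \text{ has a run on } w \text{ from } p \text{ to } q\}$; each $W_{p,q}$ is regular, since it is recognized by the NFA obtained from $\Amc$ by setting the initial state to $p$ and the single accepting state to $q$. The key observation is that $\alpha \in L_\omega(\Amc)$ iff there is some $q \in F$ such that the accepting run visits $q$ infinitely often; splitting the run at these visits shows $\alpha \in W_{q_0, q} \cdot (W_{q,q} \setminus \{\epsilon\})^\omega$. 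Wait — one must ensure the blocks between consecutive visits are nonempty, but since a visit to $q$ followed immediately by another visit with no transition in between cannot happen (each block corresponds to at least one transition, as the run is infinite and each $r_i$ consumes a letter), the blocks are automatically nonempty, so this lands in $W_{q_0,q}\, W_{q,q}^\omega$ in the sense of the paper's definition of $L^\omega$. Conversely any word in $W_{q_0,q}\, W_{q,q}^\omega$ for $q \in F$ is accepted, by concatenating the witnessing finite runs. Hence
\[
L_\omega(\Amc) \;=\; \bigcup_{q \in F} W_{q_0, q}\, W_{q,q}^\omega,
\]
which is of the required form with each $U_q = W_{q_0,q}$ and $V_q = W_{q,q}$ regular (discard the empty terms, and if $F=\emptyset$ then $L=\emptyset = \emptyset\,\emptyset^\omega$, or note $n\ge 1$ is harmless since we may repeat a term).

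The main obstacle — really the only subtle point — is the treatment of the empty word in the $V_i^\omega$ operation and making the decomposition at infinitely many visits to $q$ genuinely well-defined. One has to argue that between two consecutive indices $i < j$ at which the run is in state $q$, the infix $\alpha[i+1, j]$ is a nonempty word in $W_{q,q}$; nonemptiness holds because $j > i$ forces at least one letter to be consumed. Everything else is bookkeeping: the regularity of the $W_{p,q}$ is immediate from the NFA model, and closure of $\omega$-regular languages under union for the converse direction is the standard nondeterministic-guess construction. I would present the ``only if'' direction first if emphasizing the conceptual content, since it is the heart of Büchi's theorem, but either order works.
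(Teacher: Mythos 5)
The paper states \Cref{thm:buechi} as B\"uchi's classical theorem and includes no proof of its own, so there is no in-paper argument to compare against; your proof is the standard one --- splitting an accepting run at the infinitely many visits to a fixed accepting state $q$ to obtain $L_\omega(\Amc)=\bigcup_{q\in F}W_{q_0,q}\,W_{q,q}^\omega$ with each $W_{p,q}$ regular, and using closure under union together with concatenation and $\omega$-iteration for the converse --- and it is correct, including the nonemptiness observation that makes the decomposition compatible with the paper's definition of $L^\omega$. The one step to state fully precisely is the $V^\omega$ construction: the loop-back edges must target a state with no other incoming transitions (e.g.\ a fresh copy of the initial state), which is exactly the invariant you announce; merely redirecting transitions that enter accepting states to the original initial state is unsound when that state has incoming transitions of its own.
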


%\subsection{Labeled alphabets and words}
%
%For $d \geq 1$ we call an alphabet $\Gamma \subseteq \Sigma \times \Nbb^d$ a \emph{labeled alphabet} and refer to the entries of a vector in $\Nbb^d$ as \emph{counters}. We define the \emph{projection} $\chi : \Gamma^* \rightarrow \Sigma^*$ as $\chi(\varepsilon) = \varepsilon,\ \chi(x, \vbf) = x$ for $(x,\vbf)\in \Gamma$, and by induction, $\chi(uw) = \chi(u) \cdot \chi(w)$ for $u\in \Gamma$ and $w\in \Gamma^*$. This definition naturally extends to $\omega$-words by transfinite induction. We define the \emph{extended Parikh image} $\rho : \Gamma^*\rightarrow \Nbb^d$ as 
%$\rho(\varepsilon) = \0,\ \rho(x, \vbf) = \vbf$ for $(x,\vbf)\in \Gamma$, and $\rho(uw) = \rho(u) + \rho(w)$ for $u\in \Gamma$ and $w\in \Gamma^*$. We do not extend this latter definition to $\omega$-words. For a finite or infinite word $v$ over $\Sigma$ we call any finite or infinite word $w$ over $\Gamma$ with $\chi(w)=v$ a \emph{labeled extension of $v$ over $\Gamma$}.

\subsection{Semi-linear sets}
A \emph{linear set} of dimension $d$ for $d \geq 1$ is a set of the form $\{b_0 + b_1z_1 + \dots + b_\ell z_\ell \mid z_1, \dots, z_\ell \in \Nbb\} \subseteq \Nbb^d$ for $b_0,\ldots, b_\ell\in \Nbb^d$.
A \emph{semi-linear set} is the finite union of linear sets.
For vectors $\ubf = (u_1, \dots, u_c)\in \Nbb^c, \vbf = (v_1, \dots, v_d) \in \Nbb^d$, we denote by $\ubf \cdot \vbf = (u_1, \dots, u_c, v_1, \dots, v_d) \in \Nbb^{c+d}$ the \emph{concatenation of $\ubf$ and $\vbf$}. 
We extend this definition to sets of vectors. Let $C \subseteq \Nbb^c$ and $D \subseteq \Nbb^d$. Then $C \cdot D = \{\ubf \cdot \vbf \mid \ubf \in C, \vbf \in D\} \subseteq \Nbb^{c+d}$.
We denote by~$\0^d$ (or simply $\0$ if $d$ is clear from the context) the all-zero vector, and by~$\ebf^d_i$ (or simply $\ebf_i)$ the $d$-dimensional vector where the $i$th entry is~$1$ and all other entries are~0.
We also consider semi-linear sets over $(\Nbb \cup \{\infty\})^d$, that is semi-linear sets with an additional symbol $\infty$ for infinity. As usual, addition of vectors and multiplication of a vector with a number is defined component-wise, where $z + \infty = \infty + z = \infty + \infty = \infty$ for all $z \in \Nbb$, $z \cdot \infty = \infty \cdot z = \infty$ for all $z > 0\in \Nbb$, and $0 \cdot \infty = \infty \cdot 0 = 0$.

\subsection{Parikh-recognizable languages}
A \emph{Parikh Automaton} (PA) is a tuple $\Amc = (Q, \Sigma, q_0, \Delta, F, C)$ where $Q$, $\Sigma$, $q_0$, and~$F$ are defined as for NFA, $\Delta \subseteq Q \times \Sigma \times \Nbb^d \times Q$ is the set of \emph{labeled transitions}, and $C \subseteq \Nbb^d$ is a semi-linear set. 
We call $d$ the \emph{dimension} of $\Amc$ and refer to the entries of a vector $\vbf$ in a transition $(p, a, \vbf, q)$ as \emph{counters}.
%Note that $C$ can equivalently be represented by a family of polynomials of the form $g(z_1, \dots, z_\ell) = b_0 + \sum_{i=1}^\ell b_iz_i$ or a Presburger formula $\varphi$ with $d$ free variables (we refer to \cite{haase} for more details). 
Similar to NFA, a \emph{run} of $\Amc$ on a word $w = x_1 \dots x_n$ is a (possibly empty) sequence of labeled transitions $r = r_1 \dots r_n$ with $r_i = (p_{i-1}, x_i, \vbf_i, p_i) \in \Delta$ such that $p_0 = q_0$. We define the \emph{extended Parikh image} of a run $r$ as $\rho(r) = \sum_{i \leq n} \vbf_i$ (with the convention that the empty sum equals $\0$).
%A PA accepts all finite words $v\in \Sigma^*$ such that there exists a labeled extension $w\in \Gamma^*$ of $v$ such that $w$ is accepted by $\Amc$ and $\rho(w)\in C$, \ie, $L(\Amc, C) = \{v\in \Sigma^* \mid \text{there exists $w \in L(\Amc)$ with $\chi(w)=v$ and $\rho(w) \in C\}$}$.
We say $r$ is accepting if $p_n \in F$ and $\rho(r) \in C$,
referring to the latter condition as the \emph{Parikh condition}. 
We define the \emph{language recognized by \Amc} as $L(\Amc) = \{w \in \Sigma^* \mid \text{there is an accepting run of $\Amc$ on $w$}\}$. 
If a language $L\subseteq \Sigma^*$ is recognized by some PA, then we call $L$ \emph{Parikh-recognizable}.

\subsection{Graphs}
A \emph{(directed) graph} $G$ consists of its vertex set $V(G)$ and edge set \mbox{$E(G) \subseteq V(G) \times V(G)$}. In particular, a graph $G$ may have loops, that is, edges of the form $(u, u)$. A \emph{path} from a vertex $u$ to a vertex $v$ in $G$ is a sequence of pairwise distinct vertices $v_1 \dots v_k$ such that $v_1 = u$, $v_k = v$, and $(v_i, v_{i+1}) \in E(G)$ for all $1 \leq i < k$. Similarly, a \emph{cycle} in $G$ is a sequence of pairwise distinct vertices $v_1 \dots v_k$ such that $(v_i, v_{i+1}) \in E(G)$ for all $1 \leq i < k$, and $(v_k, v_1) \in E(G)$. If $G$ has no cylces, we call $G$ a directed acyclic graph (DAG).
For a subset $U \subseteq V(G)$, we denote by $G[U]$ the graph $G$ \emph{induced by} $U$, \ie, the graph with vertex set $U$ and edge set $\{(u,v) \in E(G) \mid u, v \in U\}$.
A \emph{strongly connected component} (SCC) in $G$ is a maximal subset $U \in V(G)$ such that for all $u, v \in U$ there is a path from~$u$ to $v$, \ie, all vertices in $U$ are reachable from each other. 
We write $SCC(G)$ for the set of all strongly connected components of $G$ (observe that $SCC(G)$ partitions~$V(G)$). 
The \emph{condensation} of $G$, written $C(G)$, is the DAG obtained from $G$ by contracting each SCC of $G$ into a single vertex, that is $V(C(G)) = SCC(G)$ and $(U, V) \in E(C(G))$ if and only if there is $u \in U$ and $v \in V$ with $(u, v) \in E(G)$. 
%Abusing the notion of DAGs, 
We call the SCCs with no outgoing edges in $C(G)$ leaves.
Note that an automaton can be seen as a labeled graph. Hence, all definitions translate to automata by considering the underlying graph (to be precise, an automaton can be seen as a labeled multigraph; however, we simply drop parallel edges).

\section{Parikh automata on infinite words}
In this section, we recall the relevant definitions of Parikh automata operating on infinite words introduced by Grobler et al. \cite{infiniteOurs} and Guha et al. \cite{infiniteZimmermann}. We then propose further definitions and compare the resulting automata.

\smallskip
A \emph{Parikh-Büchi automaton} (PBA) is a PA $\Amc = (Q, \Sigma, q_0, \Delta, F, C)$. A run of $\Amc$ on an infinite word $\alpha = \alpha_1 \alpha_2 \alpha_3 \dots$ is an infinite sequence of labeled transitions $r = r_1 r_2 r_3 \dots$ with $r_i = (p_{i-1}, \alpha_i, \vbf_i, p_i)\in \Delta$ such that $p_0 = q_0$. The automata defined below differ only in their acceptance conditions. In the following, whenever we say that an automaton $\Amc$ accepts an infinite word $\alpha$, we mean that there is an accepting run of $\Amc$ on $\alpha$.

Let us first recall the definition of \emph{(strong) reset PBA} (SPBA) introduced by Grobler et al.~\cite{infiniteOurs}: let $k_0 = 0$ and denote by $k_1, k_2, \dots$ the positions of all accepting states in~$r$. Then $r$ is accepting if $k_1, k_2, \dots$ is an infinite sequence and $\rho(r_{k_{i-1}+1} \dots r_{k_i}) \in C$ for all $i \geq 1$. The $\omega$-language recognized by an SPBA $\Amc$ is $S_\omega(\Amc) = \{\alpha \mid \Amc \text{ accepts }\alpha\}$. Intuitively worded, whenever an SPBA enters an accepting state, the Parikh condition \emph{must} be satisfied. Then the counters are reset.

Let us now recall the definition of \emph{(synchronous) reachability Parikh automata} (RPA) introduced by Guhe et al.~\cite{infiniteZimmermann}. The run $r$ is accepting if there is an $i \geq 1$ such that $p_i \in F$ and $\rho(r_1 \dots r_i) \in C$. We say there is an accepting hit in $r_i$. The $\omega$-language recognized by an RPA $\Amc$ is $R_\omega(\Amc) = \{\alpha \mid \Amc \text{ accepts }\alpha\}$.

Let us finally recall the definition of \emph{prefix PBA} (PPBA) introduced by Grobler et al.~\cite{infiniteOurs}, which are obviously equivalent to \emph{(synchronous) Büchi Parikh automata} introduced by Guhe et al. \cite{infiniteZimmermann}. The run $r$ is accepting if there are infinitely many $i \geq 1$ such that $p_i \in F$ and $\rho(r_1 \dots r_i) \in C$. The $\omega$-language recognized by a PPBA $\Amc$ is $P_\omega(\Amc) = \{\alpha \mid \Amc \text{ accepts }\alpha\}$. Hence, a PPBA can be seen as a stronger variant of RPA where we require infinitely many accepting hits instead of a single one.

We remark that we stick to the notation of the original papers, that is, we abbreviate (strong) reset Parikh-Büchi automata by SPBA and (synchronous) reachability Parikh automata by RPA. 
The attentive reader may have noticed that we do not use the term~RPBA. 
The reason for this (in addition to sticking to the original notation) is that, unlike Büchi automata, RPA do \emph{not} need to see an accepting state infinitely often. 
We show that this property implies that there are $\omega$-regular languages that are not RPA-recogniazble. 
This motivates the study of RPA that additionally need to satisfy the Büchi-condition (which we hence will call RPBA). 

\smallskip
We begin with a simple lemma. A similar result was proved in Theorem~3 of~\cite{infiniteZimmermann}, however, RPA in~\cite{infiniteZimmermann} are assumed to be \emph{complete}, i.e., for every state and every symbol there is at least one transition. The proof presented in~\cite{infiniteZimmermann} does not go through for the more general setting of non-complete RPA. 

\begin{lemma}
\label{lem:RPAregular}
 There is an $\omega$-regular language that is not recognized by any RPA.
\end{lemma}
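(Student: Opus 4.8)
The plan is to exhibit a concrete $\omega$-regular language that no RPA can recognize. I would take $L = \{\alpha \in \{a,b\}^\omega : \alpha \text{ contains infinitely many } a\text{'s}\}$; this is $\omega$-regular, witnessed by the two-state BA over $\{a,b\}$ that moves to its accepting state precisely when it reads an~$a$ (so the Büchi condition holds iff infinitely many $a$'s occur). The guiding intuition is the one highlighted before the lemma: an RPA needs only a single accepting hit, which sits at a finite position, and after that hit the only remaining requirement on the run is that it \emph{can be continued forever} -- a pure reachability property that cannot enforce infinitely many $a$'s.

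Concretely, I would assume for contradiction that an RPA $\Amc = (Q, \{a,b\}, q_0, \Delta, F, C)$ satisfies $R_\omega(\Amc) = L$, set $n = |Q|$, and run $\Amc$ on the word $\gamma = (b^n a)^\omega \in L$. Fix an accepting run $r = r_1 r_2 \dots$ of $\Amc$ on $\gamma$ together with an accepting hit at some position $i$ (the $i$-th state lies in $F$ and the extended Parikh image of $r_1 \dots r_i$ lies in~$C$). Since $\gamma$ contains infinitely many pairwise disjoint infixes $b^n$, I can pick one that starts past position~$i$, say $\gamma[j+1,j+n] = b^n$ with $j \geq i$. Reading this block, the run visits the $n+1$ states $p_j, p_{j+1}, \dots, p_{j+n}$, so by pigeonhole two of them coincide, say $p_{j'} = p_{j''}$ with $j \leq j' < j'' \leq j+n$; the transitions $r_{j'+1} \dots r_{j''}$ all read~$b$ and form a cycle at the state $p_{j'}$.

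I would then define $\beta = \gamma[1,j'] \cdot b^\omega$ and the run $\hat r = r_1 \cdots r_{j'}\,(r_{j'+1} \cdots r_{j''})^\omega$. The word $\beta$ has only finitely many $a$'s, so $\beta \notin L$; but $\hat r$ is a genuine infinite run of $\Amc$ on $\beta$ (the repeated block reads $b$'s and returns to $p_{j'}$, and the prefix reads $\gamma[1,j'] = \beta[1,j']$), and since $\hat r$ agrees with $r$ on its first $j' \geq i$ transitions it still exhibits an accepting hit at position~$i$. Hence $\beta \in R_\omega(\Amc) = L$, a contradiction. The one delicate step is obtaining a continuation of the accepting run that loops forever on $b$'s: choosing the witness $\gamma$ with pure-$b$ blocks of length $|Q|$ is exactly what makes such a $b$-cycle appear automatically via pigeonhole, while cutting the run off just after the accepting hit is what preserves the Parikh condition. (One could alternatively argue structurally that $R_\omega(\Amc)$ is always a countable union of sets of the form $w \cdot W$ with $W$ topologically closed, hence meager once contained in $L$, whose complement is countable and thus also meager, contradicting the Baire category theorem; but the direct pumping argument above seems cleaner and more elementary.)
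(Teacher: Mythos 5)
Your proof is correct and follows essentially the same route as the paper: the same language of words with infinitely many $a$'s, an accepting run on a periodic witness containing $b$-blocks of length $|Q|$, a pigeonhole-produced $b$-cycle after the accepting hit, and looping on that cycle to accept a word with finitely many $a$'s. Your version is just a slightly more explicit write-up of the paper's pumping argument (with witness $(b^n a)^\omega$ instead of $(a^n b^n)^\omega$), and it correctly avoids any completeness assumption.
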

\begin{proof}
    We show that $L = \{\alpha \in \{a,b\}^\omega \mid |\alpha|_a = \infty\}$ is not RPA-recognizable.
    Assume that there is an RPA $\Amc$ with $R_\omega(\Amc) = L$ and let $n$ be the number of states of~$\Amc$. As $\alpha = (a^n b^n)^\omega \in L$, there is an accepting run $r = r_1 r_2 r_3 \dots$ of $\Amc$ on $\alpha$. Let $i$ be the first position of an accepting hit in $r$.
    Now consider $\beta = (a^n b^n)^i \cdot b^\omega \notin L$. 
    As $\alpha$ and $\beta$ share the same prefix of length $i$, the run $r_1 \dots r_i$ is also a partial run on~$\beta[1,i]$, hence, generating an accepting hit. 
    Now observe that in every infix of the form $b^n$ a state is visited at least twice by the pigeonhole principle. Hence, we can ``infinitely pump" a $b$-block after the accepting hit in $r_i$ and obtain an accepting run on~$\beta$, contradicting $R_\omega(\Amc) = L$.
\end{proof}
\begin{remark}
\label{remark:RPA}
    In fact, complete RPA are strictly weaker than general RPA, as there is no complete RPA recognizing $\{a\}^\omega$ over $\Sigma = \{a,b\}$; an $\omega$-language that is obviously RPA-recognizable.
\end{remark}

As mentioned above, this weakness motivates the study of \emph{reachability Parikh-Büchi automata} (RPBA). 
%Again, an RPBA is syntactically equivalent to a PA $\Amc = (Q, \Sigma, q_0, \Delta, F, C)$. Let $\alpha = \alpha_1 \alpha_2 \alpha_3$ and $r = r_1 r_2 r_3$ with $r_i = (p_{i-1}, \alpha_i, \vbf_i, p_i)$ be a run of $\Amc$ on $\alpha$. 
The run $r$ is accepting if there is an $i \geq 1$ such that $\rho(r_1 \dots r_i) \in C$ and $p_i \in F$, and if there are infinitely many $j$ such that $p_j \in F$. We define
the $\omega$-language recognized by an RPBA $\Amc$ as $B_\omega(\Amc) = \{\alpha \mid \Amc \text{ accepts }\alpha\}$.

Every $\omega$-regular language is RPBA-recognizable, as we can turn an arbitrary Büchi automaton into an equivalent RPBA by labeling every transition with $0$ and \mbox{setting~$C = \{0\}$}.

\smallskip
Finally, we define a variant of PBA motivated by a proposal of Klaedke and Ruess~\cite{klaedtkeruess}. Here we consider semi-linear sets over $(\Nbb \cup \{\infty\})^d$ and compute the extended Parikh image of an infinite run using transfinite induction.
A \emph{limit Parikh-Büchi automaton} (LPBA) is a PA $\Amc = (Q, \Sigma, q_0, \Delta, F, C)$ where $C$ may use the symbol $\infty$. 
%Let $\alpha = \alpha_1 \alpha_2 \alpha_3\ldots$ and $r = r_1 r_2 r_3\ldots$ with $r_i = (p_{i-1}, \alpha_i, \vbf_i, p_i)$ be a run of $\Amc$ on $\alpha$. 
The run $r$ is accepting
if there are infinitely many~$i \geq 1$ such that $p_i \in F$, and if additionally $\rho(r) \in C$, where the $j$-th component of $\rho(r)$ is computed as follows. If there are infinitely many~$i \geq 1$ such that the $j$-th component of $\vbf_i$ has a non-zero value, then the $j$-th component of $\rho(C)$ is~$\infty$. In other words, if the sum of values in a component diverges, then its value is set to $\infty$.  
Otherwise, the infinite sum yields a positive integer. We define the $\omega$-language recognized by an LPBA $\Amc$ as $L_\omega(\Amc) = \{\alpha \mid \Amc \text{ accepts }\alpha\}$.

\begin{example}
\label{ex}
\begin{figure}
	\centering
	\begin{tikzpicture}[->,>=stealth',shorten >=1pt,auto,node distance=3cm, semithick]
	\tikzstyle{every state}=[minimum size=1cm]
	
	\node[initial, initial text={}, state] (q0) {$q_0$};
	\node[state, accepting] (q1) [right of=q0] {$q_1$};	
	
	\path
	(q0) edge [loop above] node {$a, \begin{pmatrix}1\\0\end{pmatrix}$} (q0)
	(q0) edge              node {$b, \begin{pmatrix}0\\1\end{pmatrix}$} (q1)
	(q1) edge [loop above] node {$b, \begin{pmatrix}0\\1\end{pmatrix}$} (q1)
	(q1) edge [bend left]  node {$a, \begin{pmatrix}1\\0\end{pmatrix}$} (q0)	
	;
	\end{tikzpicture}
	\caption{The automaton $\Amc$ with $C=\{(z,z), (z, \infty) \mid z \in \Nbb\}$ from \Cref{ex}.}
	\label{fig:ex}
\end{figure}

Let $\Amc$ be the automaton in \Cref{fig:ex} with $C = \{(z,z), (z, \infty) \mid z \in \Nbb\}$. 
\begin{itemize}
    \item If we interpret $\Amc$ as a PA (over finite words), then we have $L(\Amc) = \{w \in \{a,b\}^* \cdot \{b\} \mid |w|_a = |w|_b\}$. The automaton is in the accepting state after reading a $b$. The first counter counts the number of read $a$s, the second one reads the number of read $b$s. By definition of $C$ the automaton only accepts when both counters are equal (note that vectors containing an $\infty$-entry have no additional effect). 
    \item If we interpret $\Amc$ as an SPBA, then we have $S_\omega(\Amc) = \{ab\}^\omega$. Whenever the automaton reaches an accepting state also the Parikh condition must be satisfied. In the example this is only possible after reading exactly one $a$ and one $b$. After that the counters are reset. 
    \item If we interpret $\Amc$ as a PPBA, then we have $P_\omega(\Amc) = L(\Amc)^\omega$. The automaton accepts a word if infinitely often the Parikh condition is satisfied in the accepting state. Observe that $C$ has no base vector and the initial state as well as the accepting state have the same outgoing edges.
    \item If we interpret $\Amc$ as an LPBA, then we have $L_\omega(\Amc) = \{\alpha \in \{a,b\}^\omega \mid |\alpha|_a < \infty\}$. The automaton must visit the accepting state infinitely often. At the same time the extended Parikh image must belong to $C$, which implies that the word contains only some finite number $z$ of $a$s (note that only the vectors of the form $(z, \infty)$ have an effect here, as at least one symbol must be seen infinitely often by the infinite pigeonhole principle).
    \item If we interpret $\Amc$ as an RPA, then we have $R_\omega(\Amc) = \{\alpha \in \{a,b\}^\omega \mid \alpha \text{ has a prefix in }$ $L(\Amc)\}$. The automaton has satisfied the reachability condition after reading a prefix in $L(\Amc)$. Since the automaton is complete it cannot get stuck and accepts any continuation after that. 
    \item If we interpret $\Amc$ as an RPBA, then we have $B_\omega(\Amc) = \{\alpha \in \{a,b\}^\omega \mid \alpha \text{ has a prefix}$ in $L(\Amc) \text{ and } |\alpha|_b = \infty\}$. After having met the reachability condition the automaton still needs to satisfy the Büchi condition, which enforces infinitely many visits of the accepting state. 
\end{itemize}
\end{example}

\begin{remark}
The automaton $\Amc$ in the last example is deterministic. We note that $L_\omega(\Amc)$ is not deterministic $\omega$-regular but deterministic LPBA-recognizable.
\end{remark}

We denote the class of $\omega$-languages recognized by an SPBA (PPBA, RPA, RPBA, LPBA) by \LSPBA (\LPPBA, \LRPA, \LRPBA, \LLPBA). Furthermore, denote by $\LPAPA$ the class of $\omega$-languages of the form $\bigcup_i U_iV_i^\omega$, where the $U_i$ and $V_i$ are Parikh-recognizable, by $\LPAReg$ such languages where the $U_i$ are Parikh-recognizable and the $V_i$ are regular, and by $\LRegPA$ such languages where the $U_i$ are regular and the $V_i$ are Parikh-recognizable. 
As shown by Guhe et al.\ we have $\LRPA \subsetneq \LPPBA$ \cite{infiniteZimmermann}. Likewise, Grobler et al.~\cite{infiniteOurs} have shown $\LPPBA \subsetneq \LPAPA \subsetneq \LSPBA$. We conclude this section by showing $\LRPA \subsetneq \LRPBA \subsetneq \LPPBA$. In the next section we show that $\LRPBA = \LLPBA = \LPAReg$.

\begin{lemma}
\label{lem:RPARPBA}
$\LRPA \subsetneq \LRPBA$.
\end{lemma}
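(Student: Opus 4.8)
The plan is to establish the two claims separately: containment $\LRPA \subseteq \LRPBA$ (the easy inclusion) and strictness via a separating language.

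\medskip\noindent
\textbf{Inclusion.} First I would observe that $\LRPA \subseteq \LRPBA$ follows from a product construction. Given an RPA $\Amc$, I want an RPBA that accepts the same language, i.e.\ one that additionally guarantees visiting an accepting state infinitely often. The idea is to cross $\Amc$ with a ``trivial'' gadget that loops forever after the accepting hit. Concretely, build $\Amc'$ by taking two copies of the state space: in the first copy run $\Amc$ normally (accepting states of the first copy are declared \emph{non}-accepting in $\Amc'$), and upon taking a transition into an accepting state with the Parikh condition already satisfied, move to the second copy. In the second copy we no longer care about counters; we only need to keep running forever, so we make \emph{every} state of the second copy accepting and keep all transitions. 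The counter dimension and the semi-linear set $C$ are inherited; transitions within the second copy can be labeled with $\0$. Since an RPA accepts iff it has some accepting hit and (being non-complete) can continue the run indefinitely afterwards, and in the second copy every state is accepting, the Büchi condition is met for free precisely on the words $\Amc$ accepts. A small technical point: one must argue the hit can be ``committed'' nondeterministically — because acceptance is existential over runs, guessing the first hit position and switching copies there loses nothing. This gives $B_\omega(\Amc') = R_\omega(\Amc)$.

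\medskip\noindent
\textbf{Strictness.} For the proper inclusion I would exhibit an $\omega$-language in $\LRPBA \setminus \LRPA$. The natural candidate, in light of \Cref{lem:RPAregular}, is the $\omega$-regular language $L = \{\alpha \in \{a,b\}^\omega \mid |\alpha|_a = \infty\}$ itself: it is RPBA-recognizable since every $\omega$-regular language is (as noted just before the lemma, turn a Büchi automaton for $L$ into an RPBA by zero-labeling and $C = \{0\}$), while \Cref{lem:RPAregular} already shows it is not RPA-recognizable. So strictness is essentially immediate from the preceding lemma plus the remark that $\LRegReg \subseteq \LRPBA$. I would simply cite both.

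\medskip\noindent
\textbf{Main obstacle.} There is no deep difficulty here; the only thing requiring care is the inclusion direction, specifically making sure the copy-switching construction is faithful — that switching to the all-accepting copy exactly at a valid hit neither creates spurious acceptances (it cannot, since reaching the second copy certifies a genuine hit) nor misses real ones (it cannot, since any accepting run of $\Amc$ has a first hit at which we may switch, and the suffix of that run still exists as a run in the second copy because we preserved all transitions). Once that is spelled out, combined with \Cref{lem:RPAregular}, the statement $\LRPA \subsetneq \LRPBA$ follows.
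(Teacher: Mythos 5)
Your proposal is correct and follows essentially the same route as the paper: the two-copy construction (first copy non-accepting, switch nondeterministically at the guessed hit into an all-accepting second copy with $\0$-labeled transitions) is exactly the paper's $\Amc'$, and strictness is obtained, as in the paper, by combining \Cref{lem:RPAregular} with the fact that every $\omega$-regular language is RPBA-recognizable. The only point worth spelling out a bit more is the soundness direction: reaching the second copy by itself only certifies the state part of a hit, and one uses the frozen ($\0$-labeled) counters to transfer the RPBA's Parikh requirement back to the switch position, which is precisely how the paper argues it.
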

\begin{proof}
 We show $\LRPA \subseteq \LRPBA$. Strictness follows from \Cref{lem:RPAregular}.
 The proof is very similar to the proof that $\LRPA \subsetneq \LPPBA$ \cite[Theorem 3]{infiniteZimmermann}.
 Let $\Amc = (Q, \Sigma, q_0, \Delta, F, C)$ be an RPA. The idea is to create two copies of $\Amc$, where the second copy is modified such that all counters are zero and all states are accepting. Then we use non-determinism to guess the accepting hit and transition into the second copy where the counters are frozen and all states are accepting.
 Let $\Amc' = (Q', \Sigma, q_0, \Delta', F', C)$ where $Q' = \{q, q' \mid q \in Q\}$, $F' = \{q' \mid q \in Q\}$ and $ \Delta' = \Delta \cup \{(p',a, \0, q') \mid (p, a, \vbf, q) \in \Delta\} \cup \{(p, a, \vbf, q') \mid (p, a, \vbf, q) \in \Delta, q \in F\}$ be a RPBA.
 We claim that $R_\omega(\Amc) = B_\omega(\Amc')$.

 \smallskip
 $\Rightarrow$ To show $R_\omega(\Amc) \subseteq B_\omega(\Amc')$, let $\alpha \in R_\omega(\Amc)$ with accepting run $r = r_1 r_2 r_3 \dots$ where $r_i = (p_{i-1}, \alpha_i, \vbf_i, p_i)$. Let $i \geq 1$ be an arbitrary position such that there is an accepting hit in $r_i$ (which exists by definition). Let $r'_i = (p_{i-1}, \alpha_i, \0, p_i')$ and define $r'_j = (p'_{j-1}, \alpha_j, \0, p'_j)$ for all $j > i$. 
 Then $r'=r_1 r_2 \dots r_{i-1} r'_i r'_{i+1} r'_{i+2} \dots$ is a run of $\Amc'$ on $\alpha$. 
 Furthermore, $r'$ is accepting: the accepting hit in $r_i$ translates one-to-one to an accepting hit in $r'_i$. Furthermore, all $p_j$ for $j \geq i$ are accepting by the definition of $F'$. Hence, $r'$ is accepting, thus $\alpha \in B_\omega(\Amc')$.

 \smallskip
 $\Leftarrow$ To show $B_\omega(\Amc') \subseteq R_\omega(\Amc)$, let $\alpha \in B_\omega(\Amc')$ with accepting run $r' = r'_1 r'_2 r'_3 \dots$ where $r'_i = (\hat p_{i-1}, \alpha_i, \vbf_i, \hat p_i)$ with $\hat p_i \in \{p_i, p'_i\}$. Again, let $i \geq 1$ be an arbitrary position such that there is an accepting hit in $r'_i$. 
 As there are no accepting states in the first copy of $\Amc$ in $\Amc'$, there is a point where $r$ transitions from the first copy to the second copy, \ie, there is a $j \leq i$ such that $r'_j = (p_{j-1}, \alpha_j, \vbf_j, p_j')$. 
 Observe that the counters are frozen after transitioning to the second copy, hence, we have $\rho(r'_1 \dots r'_i) = \rho(r'_1 \dots r'_j) \in C$. Furthermore, we have $p_j \in F$ by the choice of $\Delta'$. 
 Finally, observe that $\hat p_\ell = p_\ell$ for all $\ell < j$, and $\hat p_\ell = p'_\ell$ for all $\ell \geq j$. 
 Hence, we can replace $r'_j$ by $r_j = (p_{j-1}, \alpha_j, \vbf_j, p_j)$, and for all $\ell > j$ we replace $r'_\ell$ by $r_\ell = (p_{\ell - 1}, \alpha_\ell, \vbf_\ell, p_\ell)$, where $\vbf_\ell$ is arbitrary such that $r_\ell \in \Delta$ (observe that at least one such $\vbf$ exists by definition of $\Delta'$). Then $r = r'_1 r'_2 \dots r'_{j-1} r_j r_{j+1} r_{j+2} \dots$ is a run of $\Amc$ on $\alpha$ that is furthermore accepting as witnessed by the accepting hit in $r_j$. Hence $\alpha \in R_\omega(\Amc)$.
\end{proof}

Observe that a very similar construction can be used to turn an arbitrary RPBA into an equivalent PPBA. The only difference is that we choose $F' = \{q' \mid q \in F\}$. Hence we obtain the following corollary.
\begin{corollary}
\label{cor:LRPBAtoLPPBA}
$\LRPBA \subseteq \LPPBA$.
\end{corollary}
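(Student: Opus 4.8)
\section*{Proof proposal}

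The plan is to reuse almost verbatim the two-copy construction from the proof of \Cref{lem:RPARPBA}, changing only the set of accepting states, as the paragraph above indicates. Given an RPBA $\Amc = (Q, \Sigma, q_0, \Delta, F, C)$ I would build $\Amc' = (Q', \Sigma, q_0, \Delta', F', C)$ with $Q' = \{q, q' \mid q \in Q\}$ and $\Delta' = \Delta \cup \{(p', a, \0, q') \mid (p, a, \vbf, q) \in \Delta\} \cup \{(p, a, \vbf, q') \mid (p, a, \vbf, q) \in \Delta,\ q \in F\}$, exactly as before, but now take $F' = \{q' \mid q \in F\}$ and read $\Amc'$ as a PPBA. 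The first copy is a faithful, accepting-state-free simulation of $\Amc$; non-determinism guesses the position of the accepting hit and moves to the second copy through a copied accepting state; inside the second copy all increments are $\0$, so the extended Parikh image freezes and can never leave $C$ again, while the only accepting states reachable there are the copies of $F$. I claim $B_\omega(\Amc) = P_\omega(\Amc')$, which gives the corollary.

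For $B_\omega(\Amc) \subseteq P_\omega(\Amc')$ I would take an accepting run $r$ of $\Amc$ on $\alpha$, let $i$ be the first position carrying an accepting hit (so $p_i \in F$ and $\rho(r_1 \dots r_i) \in C$), switch into the second copy with the transition $(p_{i-1}, \alpha_i, \vbf_i, p'_i) \in \Delta'$, and then follow the primed, zero-labelled images of all later transitions of $r$. The prefix up to $i$ keeps the same Parikh image, so the hit at $i$ survives as a hit of $\Amc'$; from $i$ on the image stays equal to $\rho(r_1 \dots r_i) \in C$; and since $r$ obeys the Büchi condition it visits $F$ infinitely often, hence also infinitely often after position $i$, and each such later visit is a visit to $F'$ at which the frozen image is still in $C$, \ie, a fresh accepting hit of $\Amc'$. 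So $\Amc'$ has infinitely many accepting hits on $\alpha$.

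For $P_\omega(\Amc') \subseteq B_\omega(\Amc)$ I would take an accepting run $r'$ of $\Amc'$ on $\alpha$. Since every accepting state of $\Amc'$ lies in the second copy and the second copy has no edge back to the first, $r'$ leaves the first copy at some position $j$ and stays in the second copy thereafter; by the shape of $\Delta'$ this switch forces $p_j \in F$. From position $j$ on all increments are $\0$, so $\rho(r'_1 \dots r'_i) = \rho(r'_1 \dots r'_j)$ for every $i \geq j$, and because at least one accepting hit of $\Amc'$ occurs at a position $\geq j$ we get $\rho(r'_1 \dots r'_j) \in C$. De-priming $r'$ — keeping $r'_1 \dots r'_{j-1}$, replacing the switch transition by its $\Delta$-preimage ending in $p_j$, and replacing each later zero-labelled transition by some $\Delta$-transition on the same letter with the same underlying states (one exists by construction of $\Delta'$) — yields a run $r$ of $\Amc$ on $\alpha$ with an accepting hit at $j$, and $r$ visits $F$ infinitely often because $r'$ visits $F' = \{q' \mid q \in F\}$ infinitely often. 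Hence $\alpha \in B_\omega(\Amc)$.

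The bookkeeping that the required $\Delta'$-transitions exist and that the de-primed sequences are genuine runs is literally the same as in \Cref{lem:RPARPBA}, so I would not repeat it. The one genuinely new point, and the only place I expect any real (though minor) care, is the forward direction: an RPBA is content with a single hit plus the Büchi condition, whereas a PPBA demands infinitely many hits, and manufacturing them hinges on two facts meshing — the Parikh image is frozen inside $C$ the moment we enter the second copy, and the Büchi condition of $\Amc$ supplies infinitely many subsequent visits to copies of accepting states. Verifying that these combine, together with the observation that one can never return to the first copy (which is what makes the switch position $j$ well defined in the converse), is the crux.
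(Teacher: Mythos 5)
Your proposal is correct and follows exactly the route the paper intends: it reuses the two-copy construction of \Cref{lem:RPARPBA} with the sole change $F' = \{q' \mid q \in F\}$, and your verification (frozen Parikh image in the second copy supplying infinitely many hits in one direction, de-priming after the forced switch position in the other) matches the paper's argument.
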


Finally, we show that this inclusion is also strict.
\begin{lemma}
\label{lem:LRPBAstrict}
 There is an $\omega$-language that is PPBA-recognizable but not RPBA-recognizable.
\end{lemma}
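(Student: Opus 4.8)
The plan is to exhibit one explicit separating $\omega$-language. Let $L = \{\alpha \in \{a,b\}^\omega \mid |\alpha[1,n]|_a = |\alpha[1,n]|_b \text{ for infinitely many } n \geq 1\}$. I would first note that $L$ is PPBA-recognizable: the one-state PA over $\{a,b\}$ whose single (accepting) state $q_0$ carries the transitions $(q_0, a, (1,0), q_0)$ and $(q_0, b, (0,1), q_0)$, together with $C = \{(z,z) \mid z \in \Nbb\}$, accepts $\alpha$ when read as a PPBA exactly when infinitely many prefixes of $\alpha$ contain equally many $a$'s and $b$'s, i.e.\ exactly when $\alpha \in L$. (Alternatively one could reuse the automaton from \Cref{ex}.)

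For the nontrivial direction I would assume for contradiction that $B_\omega(\Bmc) = L$ for some RPBA $\Bmc = (Q, \{a,b\}, q_0, \Delta, F, C)$ and set $N := |Q|$. Since $\alpha := (a^Nb^N)^\omega$ lies in $L$, fix an accepting run $r = r_1 r_2 \dots$ of $\Bmc$ on $\alpha$; by the acceptance condition there is a position $i$ with $p_i \in F$ and $\rho(r_1 \dots r_i) \in C$, and moreover $p_j \in F$ for infinitely many $j$. I would then single out a maximal $a$-block of length $N$ lying entirely after position $i$ and preceded by a full $b$-block, say on positions $j+1, \dots, j+N$, so that $\alpha[1,j] = (a^Nb^N)^k$ for some $k \geq 1$. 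Among the $N+1$ states $p_j, p_{j+1}, \dots, p_{j+N}$ passed along this block two must coincide by the pigeonhole principle, say $p_{j+x} = p_{j+y}$ with $0 \leq x < y \leq N$, so that $r_{j+x+1}\dots r_{j+y}$ is a cycle reading $a^t$ with $t := y-x \geq 1$. Splicing this cycle in once more produces a run $r'$ of $\Bmc$ on the word $\alpha' := (a^Nb^N)^k\, a^{N+t}\, b^N\, (a^Nb^N)^\omega$.

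It then remains to verify two claims. First, $r'$ is accepting: the splice happens strictly after position $i$, so the prefix of $r'$ up to position $i$ is unchanged and still witnesses $p_i \in F$ and $\rho(r'_1 \dots r'_i) = \rho(r_1 \dots r_i) \in C$, while the tail of $r'$ is a tail of $r$ and therefore still visits $F$ infinitely often; hence $\alpha' \in B_\omega(\Bmc) = L$. Second, $\alpha' \notin L$: the balance $f(n) := |\alpha'[1,n]|_a - |\alpha'[1,n]|_b$ equals $0$ only at the positions $2N, 4N, \dots, 2Nk$, since immediately after position $2Nk$ it rises by $N+t$ and thereafter oscillates within the interval $[t, N+t]$ forever, never returning to $0$ because $t \geq 1$. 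These two facts together give the desired contradiction.

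I expect the only genuinely delicate point to be the interaction with the single Parikh test of an RPBA: the pumping has to be carried out strictly after the accepting hit so that the prefix Parikh image witnessing the Parikh condition is left untouched, and at the same time we need $N = |Q|$ consecutive identical letters available beyond that hit in order to force a repeated state; both requirements are arranged by working with $\alpha = (a^Nb^N)^\omega$ and choosing the block far enough to the right. (Essentially the same argument, reorganized around the decomposition $B_\omega(\Bmc) = \bigcup_{q \in F} U_q \cdot L_\omega(\Bmc_q)$ with each $L_\omega(\Bmc_q)$ $\omega$-regular, in fact shows the stronger statement $L \notin \LPAReg$.)
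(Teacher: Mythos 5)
Your proposal is correct and follows essentially the same route as the paper: the same separating language $L$, the same witness word $(a^Nb^N)^\omega$, a pigeonhole-based pumping of an $a$-block strictly after the single accepting hit, and the observation that the pumped word has only finitely many balanced prefixes. Your write-up is merely more explicit (the one-state PPBA and the verification that $\alpha'\notin L$), which the paper leaves as "obvious".
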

\begin{proof}
    Consider $L = \{\alpha \in \{a,b\}^\omega \mid |\alpha[1, i]|_a = |\alpha[1,i]|_b \text{ for infinitely many $i$}\}$, which is obviously PPBA-recognazable. 

    Assume that $L$ is recognized by an RPBA $\Amc$ and let $n$ be the number of states of $\Amc$. Consider an accepting run $r = r_1 r_2 r_3\ldots$ of $\Amc$ on $\alpha = (a^n b^n)^\omega$ where $r_i = (p_{i-1}, \alpha_i, \vbf_i, p_i)$, and let $k$ be the position of the accepting hit, \ie $p_k \in F$ and $\rho(r_1 \dots r_k) \in C$. By definition there are infinitely many $j \geq 1$ (and hence infinitely many $j \geq k$) such that $p_j \in F$. 
    By the pigeonhole principle, there is a state $q$ that is visited twice while reading an arbitrary $a^n$-infix, say at positions $k \leq c < d$, \ie, $p_c = p_d = q$ and $d - c < n$. 
    Hence, we can pump an infix of the form $a^{d-c}$ and obtain an accepting run on an infinite word of the form $(a^n b^n)^* (a^{n + d - c} b^n) (a^n b^n)^\omega$, which is not in~$L$, a contradiction.    
\end{proof}

\section{Characterization of \LPAReg by limit PBA and reachability PBA}
The main goal of this section is to prove the following theorem. 
\begin{theorem}
\label{thm:LimitEqualsReach}
 The following are equivalent for all $\omega$-languages $L \subseteq \Sigma^\omega$.
 \begin{enumerate}
    \item $L$ is of the form $\bigcup_i U_i V_i^\omega$, where $U_i \in \Sigma^*$ is Parikh-recognizable, and $V_i \subseteq \Sigma^*$ is regular.
    \item $L$ is LPBA-recognizable.
    \item $L$ is RPBA-recognizable.
\end{enumerate}
\end{theorem}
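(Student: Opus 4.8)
The plan is to prove the cycle of implications $(1)\Rightarrow(3)\Rightarrow(2)\Rightarrow(1)$.
For $(1)\Rightarrow(3)$, note first that the class of RPBA-recognizable $\omega$-languages is closed under finite unions (take the disjoint union of the automata with a fresh non-accepting initial state and pad the semi-linear sets with zero blocks), so it suffices to build an RPBA for a single language $UV^\omega$ with $U$ Parikh-recognizable and $V$ regular; we may assume $\varepsilon\notin V$. Fix a PA $\Amc_U$ for $U$ with semi-linear set $C$ and a standard Büchi automaton $\Bmc$ for $V^\omega$ with all transitions relabeled by $\0$ and a single accepting state $s$ that also serves as the start state and behaves like the start of a fresh $V$-block. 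The RPBA runs $\Amc_U$; for each accepting state $q$ of $\Amc_U$ it keeps a private copy $\hat q$, reached by duplicating the transitions entering $q$, and from $\hat q$ it mimics the outgoing transitions of $s$ (relabeled $\0$), thereby entering $\Bmc$. Its accepting states are $s$ and the $\hat q$'s, and its semi-linear set is $C$. Because no transition leaves a $\hat q$ back into the $\Amc_U$-part or into another $\hat q$, an accepting run is forced to read a prefix $u$ along a run of $\Amc_U$ ending in an accepting state with Parikh image in $C$ (hence $u\in U$), jump to some $\hat q$ --- which provides the accepting hit, since the $\0$-labeled jump leaves the Parikh image unchanged --- and then run $\Bmc$ on the remaining suffix, visiting $s$ infinitely often iff that suffix lies in $V^\omega$.

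For $(3)\Rightarrow(2)$, I would use a freezing construction in the spirit of \Cref{lem:RPARPBA}. Given an RPBA $\Amc$ with semi-linear set $C\subseteq\Nbb^d$, take two copies of $\Amc$: the first carries the original labeled transitions, the second has all transitions relabeled $\0$; from every accepting state of the first copy add, for each outgoing transition, a $\0$-relabeled copy leading into the second copy. Declare only the second-copy images of $F$ accepting, and keep $C$ as the semi-linear set (no $\infty$ is needed). The second copy is a sink, so an accepting run eventually enters it; the transition effecting this originates in an accepting state and freezes the counters, so the limit Parikh image of the LPBA run equals the finite Parikh image accumulated up to the accepting hit, and $B_\omega(\Amc)=L_\omega(\Amc')$ follows by the same prefix/suffix bookkeeping as in \Cref{lem:RPARPBA}.

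The main step is $(2)\Rightarrow(1)$, a Büchi-style decomposition of accepting LPBA runs. Let $\Amc$ be an LPBA with semi-linear set $C\subseteq(\Nbb\cup\{\infty\})^d$. First write $C$ as a finite union of \emph{pieces}, each determined by a set $S\subseteq\{1,\dots,d\}$ of diverging coordinates together with a linear set $L$ over the remaining coordinates: $\vbf$ belongs to the piece iff $v_j=\infty$ exactly for $j\in S$ and the restriction of $\vbf$ to $\{1,\dots,d\}\setminus S$ lies in $L$. Since $L_\omega(\Amc)$ is the union of the $\omega$-languages obtained by replacing $C$ with a single piece, fix a piece $(S,L)$. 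In an accepting run the set $T$ of transitions taken infinitely often must be contained in the set $\Delta_S$ of transitions whose label is $\0$ on all coordinates outside $S$ (otherwise some coordinate outside $S$ diverges) and must \emph{cover} $S$, meaning that for every $j\in S$ some transition in $T$ has nonzero $j$-th entry (otherwise coordinate $j$ does not diverge). Together with the accepting state $q$ that is visited infinitely often, this leaves finitely many cases $(S,L,q,T)$. For each such case let $U_{S,L,q}$ be the language of words admitting a run from $q_0$ to $q$ whose Parikh image, restricted to $\{1,\dots,d\}\setminus S$, lies in $L$ --- a Parikh-recognizable language, witnessed by $\Amc$'s transitions with final state $q$ and the semi-linear set $\{\vbf : \text{the restriction of }\vbf\text{ to }\{1,\dots,d\}\setminus S\text{ lies in }L\}$ --- and let $V_{T,q}$ be the language of words admitting a run from $q$ to $q$ that uses every transition of $T$ at least once and no transition outside $T$ --- a regular language, recognized by an NFA that simulates the sub-automaton of $\Amc$ consisting of the transitions in $T$ while recording which of them have been used. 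The claim is that $L_\omega(\Amc)$ equals the finite union of the $U_{S,L,q}\cdot(V_{T,q})^\omega$ over all cases. Inclusion from right to left: concatenating a witnessing $q_0\to q$ run with witnessing $q\to q$ loops yields an LPBA run that visits $q\in F$ infinitely often; outside $S$ its Parikh image agrees with that of the prefix run (the loops are $\0$ there), hence restricts into $L$; and every coordinate in $S$ diverges since each loop uses all of $T$ and $T$ covers $S$. Inclusion from left to right: given an accepting run $r$, choose the piece $(S,L)$ with $\rho(r)$ in it, an accepting state $q$ recurring in $r$, and $T$ the set of transitions recurring in $r$; pick a position at a visit to $q$ beyond which only transitions of $T$ occur --- the prefix up to there lies in $U_{S,L,q}$, since its Parikh image agrees with $\rho(r)$ outside $S$ --- and greedily chop the remaining infinite suffix at visits to $q$ into blocks, each continued until every transition of $T$ has been used (possible, since each transition of $T$ and the state $q$ recur in the suffix); each block is a word of $V_{T,q}$.

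The main obstacle is exactly this last implication: one must keep precise track of which coordinates may and must diverge and, correspondingly, which transitions may be used infinitely often, and one must make $V_{T,q}$ \emph{regular} while still forcing the divergence pattern to be exactly $S$ --- the requirement that each block use every transition of $T$ is what reconciles these two demands.
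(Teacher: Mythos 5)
Your proposal is correct in substance but follows a genuinely different route from the paper. Structurally, you close the cycle $(1)\Rightarrow(3)\Rightarrow(2)\Rightarrow(1)$, with a direct freezing simulation of RPBA by LPBA, whereas the paper never translates between the two automata models: it proves $(1)\Rightarrow(2)$ and $(1)\Rightarrow(3)$ by one concatenation construction and then $(2)\Rightarrow(1)$ and $(3)\Rightarrow(1)$ separately. More importantly, your key step $(2)\Rightarrow(1)$ differs from the paper's. The paper guesses a set $D$ of divergent counters and an accepting state $f$, builds a PA over the modified semi-linear set $C_D$ (its \Cref{lem:semi-linear-inf}) for the prefix, and a generalized \emph{transition} Büchi automaton forcing every counter in $D$ to be incremented infinitely often for the suffix, then invokes \Cref{lem:GTBA-BA}. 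You instead fix the \emph{exact} divergence pattern $S$, a recurring accepting state $q$, and the exact set $T$ of transitions used infinitely often, take as prefix language the Parikh-recognizable $U_{S,L,q}$ and as period language the regular ``$q$-to-$q$ loops using all of $T$ and nothing outside $T$''. This buys you two things: the periodic part is regular outright (no GTBA-to-BA conversion needed), and the requirement that every block exhausts $T$ pins the divergence pattern to exactly $S$, so the limit Parikh image provably lands in the chosen piece of $C$ — your bookkeeping of which coordinates \emph{must} versus \emph{may not} diverge is, if anything, tighter than the paper's $C_D$-based argument. The price is a finer case split (over $S$, $L$, $q$ and $T$) and the unproven but routine claim that every semi-linear $C\subseteq(\Nbb\cup\{\infty\})^d$ decomposes into finitely many exact-pattern pieces $(S,L)$; this needs a short lemma playing the role of \Cref{lem:semi-linear-inf} (for each linear component and each achievable $\infty$-pattern $S$, the restriction to the finite coordinates of the elements realizing exactly $S$ is semi-linear, by forcing a covering subset of period vectors to be used at least once).

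One concrete slip to repair in $(1)\Rightarrow(3)$: your RPBA enters the $\Bmc$-part only through a duplicated \emph{incoming} transition of an accepting state of $\Amc_U$, so it recognizes $(U\setminus\{\varepsilon\})V^\omega$; if $\varepsilon\in U$ (e.g.\ $U=\{\varepsilon\}$, $V=\{a\}$) the language $V^\omega\subseteq UV^\omega$ is lost. Either treat the case $\varepsilon\in U$ separately (e.g.\ via $V\cdot V^\omega$), or jump \emph{out of} accepting states of $\Amc_U$ into $\Bmc$ as the paper does, which covers the initial state being accepting. This is an easily fixed corner case, not a flaw in the approach.
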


Observe that in the first item we may assume that $L$ is of the form $\bigcup_i U_i V_i$, where \mbox{$U_i \in \Sigma^*$} is Parikh-recognizable, and $V_i \subseteq \Sigma^\omega$ is $\omega$-regular. Then, by simple combinatorics and Büchi's theorem we have $\bigcup_i U_i V_i = \bigcup_i U_i (\bigcup_{j_i} X_{j_i} Y_{j_i}^\omega) = \bigcup_{i, j_i} U_i (X_{j_i} Y_{j_i}^\omega) = \bigcup_{i, j_i} (U_i X_{j_i}) Y_{j_i}^\omega$, for regular languages $X_{j_i}, Y_{j_i}$, where $U_i X_{j_i}$ is regular, as regular languages are closed under concatenation.

To simplify the proof, it is convenient to consider the following generalizations of Büchi automata. A \emph{generalized Büchi automaton} (GBA) is a tuple $\Amc = (Q, \Sigma, q_0, \Delta, \Fmc)$ where $Q, \Sigma, q_0$ and $\Delta$ are defined as for BA, and $\Fmc \subseteq 2^Q$ is a collection of sets of accepting states. Then a run $r_1 r_2 r_3 \dots$ with $r_i = (p_{i-1}, \alpha_i, p_i)$ is accepting if for all $F \in \Fmc$ there are infinitely many $i$ such that $p_i \in F$. It is well-known that GBA are not more expressive than BA, see \eg \cite[Theorem 4.56]{modelchecking}. 

Furthermore, we consider a variant of GBA where acceptance is not defined via states that are seen infinitely often, but rather via transitions that are used infinitely often. A \emph{Generalized Transition Büchi Automaton} (GTBA) is a tuple $\Amc = (Q, \Sigma, q_0, \Delta, \Tmc)$ where $\Tmc \subseteq 2^\Delta$ is a collection of sets of transitions. Then a run $r_1 r_2 r_3 \dots$ is accepting if for all $T \in \Tmc$ there are infinitely many $i$ such that $r_i \in T$. 

\begin{lemma}\label{lem:GTBA-BA}
GTBA and BA have the same expressiveness.
\end{lemma}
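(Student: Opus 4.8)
The plan is to prove the two inclusions separately, the easy direction being that every BA is trivially a GTBA (take $\Tmc = \{\{(p,a,q) \in \Delta : q \in F\}\}$, i.e.\ a single set consisting of all transitions entering an accepting state; a run sees some accepting state infinitely often iff it uses infinitely often a transition into $F$). So the real content is turning a GTBA $\Amc = (Q, \Sigma, q_0, \Delta, \Tmc)$ with $\Tmc = \{T_1, \dots, T_m\}$ into an equivalent BA.

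First I would reduce the ``generalized'' aspect: just as for GBA, a GTBA with $m$ transition-acceptance sets can be converted to one with a single transition-acceptance set by taking $m$ copies of the state space, $Q \times \{1,\dots,m\}$, cycling through the copies: from copy $j$ we move to copy $j+1 \bmod m$ precisely when we take a transition in $T_j$, and the single accepting transition set consists of those transitions that leave copy $m$ (equivalently, enter copy $1$). A run of the GTBA visits every $T_j$ infinitely often iff the simulating run advances through all $m$ copies infinitely often iff it uses the distinguished transition set infinitely often. This is the standard round-robin counter trick and I would state it briefly rather than verify it in detail.

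Second, I would convert a single-transition-set Büchi automaton (a ``transition Büchi automaton'') into an ordinary (state) Büchi automaton. The clean way is a transition-splitting construction: given $\Amc = (Q, \Sigma, q_0, \Delta, T)$ with $T \subseteq \Delta$, build $\Amc' = (Q \cup \hat Q, \Sigma, q_0, \Delta', \hat Q)$ where $\hat Q = \{\hat q : q \in Q\}$ is a fresh accepting copy of $Q$; for each transition $(p, a, q) \in \Delta \setminus T$ put $(p,a,q)$ and $(\hat p, a, q)$ into $\Delta'$, and for each $(p,a,q) \in T$ put $(p, a, \hat q)$ and $(\hat p, a, \hat q)$ into $\Delta'$. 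Thus $\Amc'$ lands in a hatted (accepting) state exactly after reading a letter via a $T$-transition, and otherwise sits in an unhatted state. Hence a run of $\Amc'$ visits $\hat Q$ infinitely often iff the corresponding run of $\Amc$ uses a transition in $T$ infinitely often, and the underlying sequences of transitions in $\Delta$ are in bijection, so $L_\omega(\Amc') = L_\omega(\Amc)$. Composing the two constructions yields a BA equivalent to the original GTBA, and combined with the trivial direction this gives the claimed equality of expressiveness.

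I do not expect a serious obstacle here; the main thing to get right is the bookkeeping in the round-robin step, in particular the corner case $m = 0$ (empty $\Tmc$: every run is accepting, so the GTBA recognizes $\Sigma^\omega$ restricted to words on which some run exists — handle by taking, say, $\Tmc = \{\Delta\}$ without loss of generality, or treat it as a one-line special case) and making sure the advancing transition is triggered correctly when a single transition lies in several $T_j$'s simultaneously (it should advance the counter only for the index currently being ``awaited,'' which the construction already handles). Everything else is routine verification that the transition sequences of the simulating automaton are in natural bijection with those of the original, so that language equality follows immediately; I would phrase the two steps as short explicit constructions and leave these verifications to the reader.
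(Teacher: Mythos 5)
Your proof is correct, but it takes a different route from the paper. The paper keeps the state-splitting implicit: it passes to the \emph{line graph} of the GTBA, using $\Delta$ itself (plus the initial state) as the new state set, so that each transition-acceptance set of $\Tmc$ becomes verbatim a set of accepting \emph{states}; this yields a GBA with the same acceptance components, and the known equivalence of GBA and BA finishes the argument. You instead re-do the generalized-to-simple reduction by hand at the transition level (round-robin through $m$ copies of $Q$) and then convert the single transition-Büchi condition to a state condition by doubling the state set with an accepting ``hatted'' copy entered exactly after a $T$-transition. Both arguments are sound; the paper's line-graph trick is shorter because it delegates the round-robin bookkeeping to the cited GBA-to-BA result, while yours is self-contained and makes the corner cases (empty $\Tmc$, transitions lying in several $T_j$) explicit. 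One small wording caveat in your round-robin step: the accepting set should be the transitions that \emph{advance} from copy $m$ to copy $1$; the parenthetical ``equivalently, enter copy $1$'' is only correct if read as entering copy $1$ from copy $m$, since non-advancing transitions inside copy $1$ also end in copy $1$ and must not be accepting. With that reading your construction is exactly the standard one and the claimed language equalities go through.
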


\begin{proof}
Obviously, every BA can be turned into an equivalent GTBA by choosing $\Tmc = \{ \{(p,a,q) \mid q \in F\}\}$, hence we focus on the other direction.

Let $\Amc = (Q, \Sigma, q_0, \Delta, \Tmc)$ be a GTBA. As GBA are as expressive as BA, it is sufficient to convert $\Amc$ into an equivalent GBA.
The idea is basically to consider the line graph of $\Amc$, that is, to use $\Delta$ as the new state set, keeping the initial state.
Then there is a $b$-transition from the \emph{state} $(p, a, q)$ to every state of the form $(q, b, t)$. Hence, the acceptance component translates directly.
To be precise, we construct the GBA $\Amc' = (\Delta \cup \{q_0\}, \Sigma, q_0, \Delta', \Tmc)$ where $\Delta' = \{((p,a,q), b, (q,b,t) \mid (p,a,q), (q,b,t) \in \Delta\} \cup \{(q_0, a, (q_0, a, q) \mid (q_0, a, q) \in \Delta\}$. It is now easily verified that $L_\omega(\Amc) = L_\omega(\Amc')$.
%$\Rightarrow$ To show $L_\omega(\Amc) \subseteq L_\omega(\Amc')$, let $\alpha \in L_\omega(\Amc)$ with accepting run $r = r_1 r_2 r_3 \dots$, \ie, for all $T \in \Tmc$ there are infinitely many $i$ such that $r_i \in T$.
%Now observe that $r' = (q_0, \alpha_1, r_1) (r_1, \alpha_2, r_2) \dots$ is an accepting run of $\Amc'$ on $\alpha$ as the $r_i$ from the GTBA acceptance translate one-to-one, hence $\alpha \in L_\omega(\Amc')$.
%
%$\Leftarrow$ To show $L_\omega(\Amc') \subseteq L_\omega(\Amc)$, let $\alpha \in L_\omega(\Amc')$ with accepting run $r = (q_0, \alpha_1, r_1) (r_1, \alpha_2, r_2) \dots$, where $r_i \in \Delta$ for all $i \geq 1$. By the definition of acceptance, for every $F \in \Fmc$ there are infinitely many $i$ such that $r_i \in F$.
%Similar to the forward direction, $r_1 r_2 r_3\dots$ is an accepting run of $\Amc$ on $\alpha$, hence $\alpha \in L_\omega(\Amc)$.
\end{proof}

\Cref{thm:LimitEqualsReach} will be a direct consequence from the following lemmas. The first lemma shows the implication $(1) \Rightarrow (2)$.

\begin{lemma}
 If $L \in \LPAReg$, then $L$ is LPBA-recognizable.   
\end{lemma}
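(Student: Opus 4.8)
The plan is to reduce, via closure under finite union, to the case of a single language $UV^\omega$ with $U$ Parikh-recognizable and $V$ regular, and then to concatenate a Parikh automaton for $U$ with a Büchi automaton for $V^\omega$, taking care that the counters of the Büchi part do not interfere with the limit Parikh condition. For closure under union, given LPBAs $\Amc_1,\dots,\Amc_n$ of dimensions $d_1,\dots,d_n$ with semi-linear sets $C_1,\dots,C_n$, I would take the disjoint union of their state sets plus a fresh initial state $q_0$, pad every transition label of $\Amc_i$ to dimension $D=d_1+\dots+d_n$ by placing it in the $i$-th block of coordinates and filling the rest with $\0$, and add, for every transition leaving the initial state of $\Amc_i$, a copy leaving $q_0$. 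The accepting states are $F_1\cup\dots\cup F_n$, and $C$ is the union over $i$ of the semi-linear sets $\{\0^{d_1}\}\cdots\{\0^{d_{i-1}}\}\cdot C_i\cdot\{\0^{d_{i+1}}\}\cdots\{\0^{d_n}\}$ (vectors that are $\0$ outside the $i$-th block and lie in $C_i$ there). Every infinite run leaves $q_0$ into exactly one block $i$ and is then trapped there, so it simulates a run of $\Amc_i$; all coordinates outside block $i$ stay $0$ and in particular do not diverge, so the limit Parikh image lies in $C$ iff the simulated $\Amc_i$-run satisfies its limit Parikh condition, and the Büchi conditions agree as well. Hence it suffices to show that $UV^\omega$ is LPBA-recognizable (equivalently, by the remark above, that $UV$ is LPBA-recognizable for Parikh-recognizable $U$ and $\omega$-regular $V$).

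For the single-union case, fix a PA $\Amc=(Q,\Sigma,q_0,\Delta,F,C)$ of dimension $d$ with $L(\Amc)=U$ and a Büchi automaton $\Bmc=(P,\Sigma,p_0,\Delta_\Bmc,F_\Bmc)$ with $L_\omega(\Bmc)=V^\omega$ (if $V$ has no nonempty word, then $UV^\omega=\emptyset$ is trivially LPBA-recognizable, so assume otherwise). I would build the LPBA on the disjoint union $Q\uplus P$ with initial state $q_0$, accepting states $F_\Bmc$, and semi-linear set $C$, now read as a subset of $(\Nbb\cup\{\infty\})^d$ that happens not to use $\infty$. Its transitions are: all of $\Delta$ with their original $d$-dimensional labels; all of $\Delta_\Bmc$, relabelled with $\0^d$; and a \emph{switch} transition $(q,a,\0^d,s)$ for every $q\in F$ and every $(p_0,a,s)\in\Delta_\Bmc$. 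The crucial observation is that once a run enters $P$ it can never return to $Q$, and every transition it takes from then on (including the switch) carries $\0^d$, while only finitely many transitions are taken beforehand; therefore the limit Parikh image of any run equals the finite extended Parikh image of its $\Amc$-prefix, and no coordinate diverges.

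With this in hand, correctness is a direct unwinding in both directions: a run accepts $\alpha$ iff $\alpha=u\beta$ where the $\Amc$-prefix read on $u$ ends in a state of $F$ with extended Parikh image in $C$ — that is, $u\in U$ — and the $\Bmc$-part read on $\beta$ visits $F_\Bmc$ infinitely often — that is, $\beta\in V^\omega$; here one uses that, since $F_\Bmc\subseteq P$, the Büchi condition forces the switch to be taken eventually, and that after the switch the run faithfully simulates a run of $\Bmc$. I would also check the degenerate case $u=\varepsilon$, which is handled because the switch transitions are already available from $q_0$ precisely when $q_0\in F$, and the requirement $\0\in C$ (needed for $\varepsilon\in U$) is automatically enforced by the limit Parikh condition on the all-zero run.

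The only genuinely delicate point is the mismatch between the LPBA acceptance condition, which constrains the Parikh image of the \emph{entire} infinite run, and what we actually want, namely a membership test in $C$ for the finite $U$-prefix alone. Zeroing out all counters in the $\Bmc$-component (and on the switch transitions) is exactly what reconciles the two: it forces the limit Parikh image to coincide with the finite Parikh image of the prefix, so no spurious divergence and no extra contribution can arise. Everything else — the trapping arguments, the simulation of $\Bmc$ after the switch, and the union construction — is routine bookkeeping.
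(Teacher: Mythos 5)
Your core construction is exactly the paper's: glue a PA for $U$ to a Büchi automaton for $V^\omega$ via switch transitions out of the PA's accepting states, label everything after the switch with $\0$, take the Büchi automaton's accepting states as the accepting states, and observe that the limit Parikh image then collapses to the finite Parikh image of the $U$-prefix, so no $\infty$-entries ever arise. That part, including the treatment of $u=\varepsilon$, is correct and matches the paper, which however only asserts closure of $\LLPBA$ under union as a ``standard construction'' -- and it is precisely in your explicit union construction that there is a genuine flaw.

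With $C=\bigcup_i \hat C_i$, where $\hat C_i$ is $C_i$ placed in block $i$ and padded with $\0$ elsewhere, nothing ties the block in which the run is trapped to the block whose condition is satisfied. Your claim that the limit Parikh image lies in $C$ \emph{iff} the simulated $\Amc_i$-run satisfies its own limit Parikh condition fails when that run has all-zero limit image: the full vector is then $\0$, which lies in $\hat C_j$ as soon as $\0\in C_j$ for some $j\neq i$, so the union automaton accepts although $\0\notin C_i$ and no $\Amc_j$ need accept the word. This is realizable in the very application at hand: take $U_1=\{w\in\{a,b\}^*\mid |w|_a=1\}$ (one state, counter incremented on $a$, $C_1=\{1\}$) and $V_1=\{b\}$, and any second pair over a disjoint letter whose set $C_2$ contains $\0$; your union of the two concatenation automata accepts $b^\omega$, which lies in neither $U_1V_1^\omega$ nor $U_2V_2^\omega$. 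The repair is routine: add one marker coordinate per block, set to $1$ exactly on the transitions leaving the fresh initial state into block $i$, and replace $\hat C_i$ by its restriction requiring marker $i$ to equal $1$ and all other markers to equal $0$; the markers are incremented only once, so they never diverge and the limit semantics is unaffected. With this fix (or by simply invoking closure under union, as the paper does) your proof goes through and is essentially the paper's proof.
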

\begin{proof}
    First observe that $\LLPBA$ is closed under union (this can be shown using a standard construction).
    Hence, it is sufficient to show how to construct an LPBA for an $\omega$-language of the form $L = UV^\omega$, where $U$ is Parikh-recognizable and $V$ is regular.

    Let $\Amc_1 = (Q_1, \Sigma, q_1, \Delta_1, F_1, C)$ be a PA with $L(\Amc_1) = U$ and $\Amc_2 = (Q_2, \Sigma, q_2, \Delta_2, F_2)$ be a Büchi automaton with $L_\omega(\Amc_2) = V^\omega$. We use the following standard construction for concatenation. Let $\Amc = (Q_1 \cup Q_2, \Sigma, q_1, \Delta, F_2, C)$ be an LPBA where 
    $$\Delta = \Delta_1 \cup \{(p, a, \0, q) \mid (p, a, q) \in \Delta_2\} \cup \{(f, a, \0, q) \mid (q_2, a, q) \in \Delta_2, f \in F_1\}.$$
    We claim that $L_\omega(\Amc) = L$.

    \smallskip
    $\Rightarrow$ To show $L_\omega(\Amc) \subseteq L$, let $\alpha \in L_\omega(\Amc)$ with accepting run $r_1 r_2 r_3 \dots$ where $r_i = (p_{i-1}, \alpha_i, \vbf_i, p_i)$. As only the states in $F_2$ are accepting, there is a position $j$ such that $p_{j-1} \in F_1$ and $p_j \in Q_2$. 
    In particular, all transitions of the copy of $\Amc_2$ are labeled with $\0$, \ie, $\vbf_i = \0$ for all $i \geq j$. Hence $\rho(r) = \rho(r_1 \dots r_{j-1}) \in C$ (in particular, there is no $\infty$ value in $\rho(r)$).
    We observe that $r_1 \dots r_{j-1}$ is an accepting run of $\Amc_1$ on $\alpha[1,j-1]$, as $p_{j-1} \in F_1$ and $\rho(r_1 \dots r_{j-1}) \in C$.
    For all $i \geq j$ let $r'_i = (p_{i-1}, \alpha_i, p_i)$. Now observe that $(q_2, \alpha_j, p_j)r'_{j+1} r'_{j+2} \dots$ is an accepting run of $\Amc_2$ on $\alpha_j \alpha_{j+1} \alpha_{j+2} \dots$, hence $\alpha \in L(\Amc_1) \cdot L_\omega(\Amc_2) = L$.

    \smallskip
    $\Leftarrow$ To show $L = UV^\omega \subseteq L_\omega(\Amc)$, let $w \in L(\Amc_1)=U$ with accepting run $s$, and $\alpha \in L_\omega(\Amc_2)=V^\omega$ with accepting run $r = r_1 r_2 r_3 \dots$, where $r_i = (p_{i-1}, \alpha_1, p_i)$.
    Observe that~$s$ is also a partial run of $\Amc$ on $w$, ending in an accepting state $f$. By definition of~$\Delta$, we can continue the run $s$ in $\Amc$ basically as in $r$. To be precise, let $r'_1 = (f, \alpha_1, \0, p_1)$, and, for all $i > 1$ let $r'_i = (p_{i-1}, \alpha_i, \0, p_i)$. Then $s r'_1 r'_2 r'_3 \dots$ is an accepting run of $\Amc$ on $w \alpha$, hence $w \alpha \in L_\omega(\Amc)$.
\end{proof}

Observe that the construction in the proof of the lemma works the same way when we interpret $\Amc$ as an RPBA (every visit of an accepting state has the same good counter value; this argument is even true if we interpret $\Amc$ as a PPBA), showing the implication~$(1) \Rightarrow (3)$.
\begin{corollary}
If $L \in \LPAReg$, then $L$ is RPBA-recognizable.
\end{corollary}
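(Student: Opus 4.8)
The plan is to reuse, verbatim, the automaton $\Amc$ built in the proof of the preceding lemma, but to read its acceptance condition as that of an RPBA instead of an LPBA, and to check that the forward and backward inclusions go through with only cosmetic changes. First I would record the auxiliary fact that $\LRPBA$ is closed under finite unions (the analogue of the remark opening the previous proof for $\LLPBA$): given two RPBA with semi-linear sets $C_1 \subseteq \Nbb^{d_1}$ and $C_2 \subseteq \Nbb^{d_2}$, take their disjoint union, add a fresh initial state whose outgoing transitions copy those of the two original initial states, pad the counter vectors of the first automaton with $d_2$ trailing zeros and those of the second with $d_1$ leading zeros, and set $C = (C_1 \cdot \{\0^{d_2}\}) \cup (\{\0^{d_1}\} \cdot C_2)$, which is again semi-linear; the Büchi component of each summand is already ensured internally, so the status of the fresh initial state is irrelevant. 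Hence it suffices to construct an RPBA for a single $L = UV^\omega$ with $U$ Parikh-recognizable and $V$ regular.

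Now take $\Amc = (Q_1 \cup Q_2, \Sigma, q_1, \Delta, F_2, C)$ exactly as in the previous proof, with $\Amc_1$ a PA for $U$, $\Amc_2$ a Büchi automaton for $V^\omega$, and $\Delta$ the concatenation transition set defined there, and claim $B_\omega(\Amc) = L$. For $L \subseteq B_\omega(\Amc)$: given $w \in U$ with accepting PA-run $s$ ending in some $f \in F_1$, and $\alpha \in V^\omega$ with accepting Büchi-run $r$, the concatenated run $s r'_1 r'_2 \dots$ (using the $\0$-labelled copy of $\Delta_2$ together with the bridging transitions out of $f$) is a run of $\Amc$ on $w\alpha$ that visits $F_2 = F$ infinitely often, and at every such visit the accumulated extended Parikh image equals $\rho(s) \in C$, because all transitions taken after $s$ carry the zero vector; so this run is RPBA-accepting. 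For $B_\omega(\Amc) \subseteq L$: an RPBA-accepting run sees an accepting state infinitely often, and the only accepting states lie in the copy of $\Amc_2$, so at some position $j$ the run crosses from $F_1$ into $Q_2$; the prefix up to $j-1$ is then an accepting PA-run of $\Amc_1$ — its state is in $F_1$, and since the tail after $j-1$ is $\0$-labelled its Parikh image equals the one recorded at the accepting hit, which lies in $C$ — while the suffix is an accepting Büchi-run of $\Amc_2$, so the input lies in $UV^\omega$.

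I do not expect a genuine obstacle: this is precisely the observation stated just before the corollary, namely that the LPBA construction works unchanged as an RPBA because the counters are frozen at $\0$ throughout the $\Amc_2$-phase. The only point that needs care in the write-up is keeping straight which half of the RPBA condition is discharged by which component — the Büchi requirement by $\Amc_2$, the single Parikh hit by $\Amc_1$ plus the freezing observation — and stating the closure of $\LRPBA$ under union compatibly with a single shared semi-linear set, which is why I would spell that closure out explicitly before invoking it.
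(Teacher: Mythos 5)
Your main argument is exactly the paper's: the paper proves this corollary by observing that the concatenation automaton $\Amc$ built for the LPBA case works verbatim as an RPBA, since every visit of an accepting state (all of which lie in the $\Amc_2$-copy) sees the frozen counter value $\rho(s)\in C$, and your verification of both inclusions is a correct spelling-out of that observation. The one place where your write-up has a genuine (though easily repaired) flaw is the explicit union construction for $\LRPBA$: taking the disjoint union with $C=(C_1\cdot\{\0^{d_2}\})\cup(\{\0^{d_1}\}\cdot C_2)$ permits cross-acceptance. If $\0^{d_1}\in C_1$, a run living entirely in the second copy that reaches one of its accepting states with accumulated vector $\0^{d_1+d_2}$ produces an accepting hit via the first disjunct, even when $\0^{d_2}\notin C_2$ and neither original automaton accepts the word; a two-state example with $C_1=\{0\}$, $C_2=\{1\}$ already realizes this, so the construction as stated does not recognize the union. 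The standard fix is to add one tag coordinate (incremented on the initial transitions into, say, the second copy) and use $C=(C_1\cdot\{\0^{d_2}\}\cdot\{0\})\cup(\{\0^{d_1}\}\cdot C_2\cdot\{1\})$, which is again semi-linear; with that repair your proof coincides with the paper's.
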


For the backwards direction we need an auxiliary lemma, essentially stating that semi-linear sets over $C \subseteq (\Nbb \cup \{\infty\})^d$ can be modified such that $\infty$-entries in vectors in~$C$ are replaced by arbitrary integers, and remain semi-linear.

\begin{lemma}
    \label{lem:semi-linear-inf}
    Let $C \subseteq (\Nbb \cup \{\infty\})^d$ be semi-linear and $D \subseteq \{1, \dots, d\}$. Let $C_D \subseteq \Nbb^d$ be the set obtained from $C$ by the following procedure.
    \begin{enumerate}
        \item Remove every vector $\vbf = (v_1, \dots, v_d)$ where $v_i = \infty$ for an $i \notin D$.
        \item As long as $C_D$ contains a vector $\vbf = (v_1, \dots, v_n)$ with $v_i = \infty$ for an $i \leq d$: replace $\vbf$ by all vectors of the form $(v_1, \dots v_{i-1}, z, v_{i+1}, \dots, v_d)$ for $z \in \Nbb$.
    \end{enumerate}
    Then $C_D$ is semi-linear.
\end{lemma}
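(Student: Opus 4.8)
The plan is to peel the statement down to a single linear set and then expose a transparent ``cylinder over a linear set'' structure. First, unwinding the two-step procedure, $C_D = \{\ubf \in \Nbb^d \mid \text{there is } \vbf \in C \text{ with } \Inf(\vbf) \subseteq D \text{ and } u_i = v_i \text{ for all } i \notin \Inf(\vbf)\}$, where $\Inf(\vbf)$ denotes the set of coordinates of $\vbf$ equal to $\infty$. This description commutes with finite unions in $C$: if $C = C' \cup C''$ then $C_D = (C')_D \cup (C'')_D$. Since a semi-linear set is a finite union of linear sets, and a finite union of semi-linear sets is semi-linear, it suffices to prove the claim when $C$ itself is linear, say $C = \{b_0 + \sum_{j=1}^{\ell} b_j z_j \mid z_1, \dots, z_\ell \in \Nbb\}$ with $b_0, \dots, b_\ell \in (\Nbb \cup \{\infty\})^d$.

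The only subtle point — the reason one cannot simply read off the $\infty$-pattern from $b_0$ and the $b_j$ — is the convention $0 \cdot \infty = 0$: a period carrying an $\infty$ contributes $\0$ when its coefficient is $0$. I would therefore partition $\Nbb^\ell$ by the support $S = \{j \mid z_j > 0\}$ of the coefficient vector. For $S \subseteq \{1,\dots,\ell\}$ set $C_S = \{b_0 + \sum_{j \in S} b_j z_j \mid z_j \geq 1 \text{ for all } j \in S\}$; after the substitution $z_j \mapsto z_j + 1$ this is again linear (over $(\Nbb\cup\{\infty\})^d$), with base $b_0 + \sum_{j \in S} b_j$. Then $C = \bigcup_S C_S$, and — crucially — \emph{every} vector of $C_S$ has the same $\infty$-pattern, namely the set $E_S := \Inf(b_0) \cup \bigcup_{j \in S} \Inf(b_j)$: for $i \in E_S$ some summand is $\infty$ (using $z_j \geq 1$), while for $i \notin E_S$ all finitely many summands are finite. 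Consequently $C_D = \bigcup_{S \,:\, E_S \subseteq D} (C_S)_D$, so it remains to show that each $(C_S)_D$ with $E_S \subseteq D$ is semi-linear, in fact linear.

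Fix such an $S$ and write $\overline{E_S} = \{1,\dots,d\} \setminus E_S$. By definition of $E_S$, all of $b_0$ and the $b_j$ ($j \in S$) have only finite entries on the coordinates in $\overline{E_S}$, so the projection $\pi_{\overline{E_S}}(C_S) = \{\pi_{\overline{E_S}}(b_0) + \sum_{j \in S} \pi_{\overline{E_S}}(b_j)\, z_j \mid z_j \geq 1\}$ is an honest linear set over $\Nbb^{\overline{E_S}}$, while on $E_S$ every vector of $C_S$ is constantly $\infty$. Hence replacing those $\infty$-entries by arbitrary naturals yields exactly $(C_S)_D = \{\ubf \in \Nbb^d \mid \pi_{\overline{E_S}}(\ubf) \in \pi_{\overline{E_S}}(C_S)\}$, the cylinder over $\pi_{\overline{E_S}}(C_S)$. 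If $\pi_{\overline{E_S}}(C_S) = \{a_0 + \sum_{p} a_p x_p \mid x_p \in \Nbb\}$ with $a_p \in \Nbb^{\overline{E_S}}$, then $(C_S)_D = \{\hat a_0 + \sum_p \hat a_p x_p + \sum_{i \in E_S} \ebf_i y_i \mid x_p, y_i \in \Nbb\}$, where $\hat v \in \Nbb^d$ is $v$ padded with zeros on $E_S$; this is manifestly linear. Taking the union over the finitely many admissible $S$, and then over the finitely many linear pieces of the original $C$, shows that $C_D$ is semi-linear.

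I do not expect a genuine obstacle here: the proof is pure bookkeeping with closure properties of (semi-)linear sets (projection onto a block of coordinates, cylindrification, finite union). The one place requiring care is the partition by coefficient support, which is exactly what makes the $\infty$-pattern constant on each piece and thereby legitimizes the projection-and-cylinder argument; without it the mixed role of $\infty$ under the convention $0\cdot\infty = 0$ would make the $\infty$-positions depend on the point, and the set-theoretic manipulation above would fail.
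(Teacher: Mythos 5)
Your proof is correct, and at the top level it follows the same strategy as the paper: reduce to a single linear set, discard generators whose $\infty$-entries fall outside $D$, and turn the remaining $\infty$-positions into free directions by adding unit-vector periods $\ebf_i$. The genuine difference is your partition of the coefficient space by the support $S$, which makes the $\infty$-pattern $E_S$ constant on each piece before you project onto the finite coordinates and cylindrify. The paper skips this step: after removing the bad periods it asserts directly that $C_D$ equals the single linear set $\{b_0 + \sum_{j\leq\ell} \bar b_j z_j + \sum_{i\in\Inf(b_j)} \ebf_i z_{ij} \mid z_j, z_{ij}\in\Nbb\}$. Under the convention $0\cdot\infty=0$ this identity can overshoot, which is exactly the subtlety you flag: for $d=2$, $D=\{1\}$ and $C=\{(0,0)+(\infty,1)z \mid z\in\Nbb\}$, the procedure yields $\{(0,0)\}\cup\{(x,y) \mid x\in\Nbb,\ y\geq 1\}$, whereas the paper's formula yields all of $\Nbb^2$, since it frees the coordinate $i\in\Inf(b_j)$ even for points with $z_j=0$, where the original vector has no $\infty$ in that coordinate (your $E_S$, which also accounts for $\Inf(b_0)$, avoids both issues). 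So your support partition is not mere bookkeeping; it is what makes the replace-$\infty$-by-unit-periods idea into a correct identity. The price is that each linear piece of $C$ contributes a finite union (over admissible supports $S$) of linear sets rather than a single one, which is immaterial for semi-linearity; what it buys is an argument that is actually tight, and of the two write-ups yours is the one I would keep.
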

\begin{proof}
For a vector $\vbf = (v_1, \dots, v_d) \in (\Nbb \cup \{\infty\})^d$, let $\Inf(\vbf) = \{i \mid v_i = \infty\}$ denote the positions of $\infty$-entries in $\vbf$. 
Furthermore, let $\bar\vbf = (\bar{v}_1, \dots, \bar{v}_d)$ denote the vector obtained from $v$ by replacing every $\infty$-entry by 0, \ie, $\bar{v}_i = 0$ if $v_i = \infty$, and $\bar{v}_i = v_i$ otherwise.

We carry out the following procedure for every linear set of the semi-linear set independently, hence we assume that $C = \{b_0 + b_1z_1 + \dots + b_\ell z_\ell \mid z_1, \dots, z_\ell \in \Nbb\}$ is linear. 
We also assume that there is no $b_j$ with $\Inf(b_j) \not\subseteq D$, otherwise, we simply remove it. 

Now, if $\Inf(b_0) \not\subseteq D$, then $C_D = \varnothing$.
Otherwise, $C_D = \{b_0 + \sum_{j\leq \ell} \bar{b}_j z_j + \sum_{i \in \Inf(b_j)} \ebf_i z_{ij} \mid z_j, z_{ij} \in \Nbb\}$, which is linear by definition.
\end{proof}

We are now ready to prove the following lemma, showing the implication $(2) \Rightarrow (1)$.
\begin{lemma}
If $L$ is LPBA-recognizable, then $L \in \LPAReg$. 
\end{lemma}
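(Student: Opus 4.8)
The plan is to write $L_\omega(\Amc)$ directly in the form $\bigcup_i U_i V_i$ with $U_i \subseteq \Sigma^*$ Parikh-recognizable and $V_i \subseteq \Sigma^\omega$ $\omega$-regular; by the observation following \Cref{thm:LimitEqualsReach} this is all that is needed. Fix an LPBA $\Amc = (Q, \Sigma, q_0, \Delta, F, C)$ of dimension $d$ and consider an accepting run $r = r_1 r_2 \dots$ with $r_i = (p_{i-1}, \alpha_i, \vbf_i, p_i)$. Let $D \subseteq \{1, \dots, d\}$ be the set of coordinates that are incremented infinitely often along $r$; then $\rho(r)$ equals $\infty$ exactly on the coordinates in $D$. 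Since there are only finitely many coordinates outside $D$ and each is incremented only finitely often, we may pick a position $m$ after which no transition increments a coordinate outside $D$. Setting $q := p_m$, the run splits into a finite prefix reaching $q$ and an infinite suffix starting in $q$ that uses only transitions which are zero off $D$, that visits $F$ infinitely often, and that increments each coordinate of $D$ infinitely often. This suggests parametrizing the decomposition by pairs $(D, q)$.

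For a pair $(D, q)$ I would let $W_{D,q} \subseteq \Sigma^\omega$ be the set of infinite words admitting a run from $q$ in the sub-automaton of $\Amc$ obtained by deleting every transition $(p, a, \vbf, p')$ with $\vbf_j > 0$ for some $j \notin D$, under the acceptance requirement ``visit $F$ infinitely often, and for every $j \in D$ use a transition incrementing coordinate $j$ infinitely often''. This is a generalized transition Büchi condition, so $W_{D,q}$ is $\omega$-regular by \Cref{lem:GTBA-BA}. I would let $U_{D,q} \subseteq \Sigma^*$ be the language recognized by the PA that coincides with $\Amc$ except that its only accepting state is $q$ and its semi-linear set is $C'_D := \{\vbf \in \Nbb^d : \text{the vector that is } \infty \text{ on } D \text{ and agrees with } \vbf \text{ off } D \text{ lies in } C\}$; granting that $C'_D$ is semi-linear, $U_{D,q}$ is Parikh-recognizable.

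It then remains to check $L_\omega(\Amc) = \bigcup_{D\subseteq\{1,\dots,d\},\,q\in Q} U_{D,q} \cdot W_{D,q}$. Inclusion ``$\subseteq$'' is the run-splitting above: with $D$ the divergent set of an accepting run and $q = p_m$ the cut state, the prefix Parikh image $\vbf$ satisfies $(\infty \text{ on } D,\ \vbf \text{ off } D) = \rho(r) \in C$, so $\vbf \in C'_D$, and the suffix witnesses membership in $W_{D,q}$. For ``$\supseteq$'', a prefix run to $q$ with Parikh image in $C'_D$ followed by a suffix run from $q$ witnessing $W_{D,q}$ is a run of $\Amc$ whose divergent set is exactly $D$ — the suffix hits every coordinate of $D$ infinitely often and no coordinate off $D$ after the cut — so its extended Parikh image is again $(\infty \text{ on } D,\ \vbf \text{ off } D)\in C$ and it visits $F$ infinitely often, hence it is accepting on the concatenated word. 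Since each $U_{D,q}$ is Parikh-recognizable and each $W_{D,q}$ is $\omega$-regular, this gives $L \in \LPAReg$.

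The one real obstacle is the semi-linearity of $C'_D$. This is in the spirit of \Cref{lem:semi-linear-inf}, with the difference that the $\infty$-entries must be replaced on \emph{exactly} the set $D$, not on some subset; the coarser set produced by \Cref{lem:semi-linear-inf} would in fact make the ``$\supseteq$'' inclusion fail, since a suffix diverging all of $D$ forces $\infty$ on every coordinate of $D$. I would prove it by treating each linear piece $\{b_0 + b_1 z_1 + \dots + b_\ell z_\ell\}$ of $C$ separately: the set of parameter tuples $(z_1, \dots, z_\ell)$ for which the generated vector has $\infty$-support exactly $D$ is defined by finitely many conditions of the form $z_j = 0$ and $\sum_{j \in S} z_j \geq 1$, hence is semi-linear, and projecting out the coordinates of $D$ from the generated vectors is an affine image of that set; taking the finite union over all linear pieces yields $C'_D$. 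The remaining ingredients — that a tail of an accepting run still uses only the restricted transitions, and the routine verification of the two inclusions — are straightforward.
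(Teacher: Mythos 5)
Your decomposition is essentially the paper's: guess the set $D$ of coordinates that diverge, recognize the prefix up to a cut point with a PA whose semi-linear set has the $\infty$-entries substituted by arbitrary naturals, and recognize the suffix with a GTBA (hence, via \Cref{lem:GTBA-BA} and the remark after \Cref{thm:LimitEqualsReach}, an $\omega$-regular language) that forbids increments off $D$, forces every coordinate of $D$ to be incremented infinitely often, and enforces the Büchi condition; cutting at an arbitrary state $q$ rather than at an accepting state, as you do, is harmless since the suffix condition already demands infinitely many visits to $F$. The one substantive difference is your set $C'_D$ versus the paper's $C_D$ from \Cref{lem:semi-linear-inf}, and here your refinement is in fact the correct one: the paper's backward step asserts that $\rho(s)\in C_D$ yields a vector $\ubf\in C$ with $u_i=\infty$ for \emph{all} $i\in D$ and $u_i=v_i$ otherwise, but \Cref{lem:semi-linear-inf} only substitutes $\infty$-entries whose support is \emph{contained} in $D$, so this assertion does not follow from it — exactly the failure of the ``$\supseteq$'' inclusion you point out (e.g.\ a one-state automaton with an $a$-loop labeled $1$ and $C=\{5\}$: for $D=\{1\}$ the decomposition with $C_D$ contributes $a^\omega$, although $L_\omega(\Amc)=\emptyset$). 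Your exact-support set $C'_D$ repairs this, and your per-linear-piece argument for its semi-linearity (conditions $z_j=0$ off $D$ and ``some period with $\infty$ at $i$ used at least once'' for $i\in D$ not covered by the base vector, then an affine image) is sound, though slightly stronger than what \Cref{lem:semi-linear-inf} provides. Two details you wave through should still be said explicitly: visiting $F$ infinitely often must be encoded as a transition component of the GTBA, and when lifting the GTBA run back to labeled transitions one must handle several labeled transitions collapsing to the same unlabeled one (the paper's round-robin remark) so that every coordinate of $D$ really diverges.
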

\begin{proof}
    Let $\Amc = (Q, \Sigma, q_0, \Delta, F, C)$ be an LPBA of dimension $d$. The idea is as follows. We guess a subset $D \subseteq \{1, \dots, d\}$ of counters whose values we expect to be $\infty$.
    Observe that every counter not in $D$ has a finite value, hence for every such counter there is a point where all transitions do not increment the counter further.
    For every subset $D \subseteq \{1, \dots, d\}$ we decompose $\Amc$ into a PA and a GTBA. In the first step we construct a PA where every counter not in $D$ reaches its final value and is verified. In the second step we construct a GTBA ensuring that for every counter in $D$ at least one transition adding a non-zero value to that counter is used infinitely often. This can be encoded directly into the GTBA. Furthermore we delete all transitions that modify counters not in $D$. 

    Fix $D \subseteq \{1, \dots, d\}$ and $f \in F$, and define the PA $\Amc^D_f = (Q, \Sigma, q_0, \Delta, \{f\}, C_D)$ where~$C_D$ is defined as in \Cref{lem:semi-linear-inf}.
    Furthermore, we define the GTBA $\Bmc^D_f = (Q, \Sigma, f, \Delta^D, \Tmc^D)$ where $\Delta^D$ contains the subset of transitions of $\Delta$ where the counters not in $D$ have zero-values (just the transitions without vectors for the counters, as we construct a GTBA). On the other hand, for every counter $i$ in $D$ there is one acceptance component in $\Tmc^D$ that contains exactly those transitions (again without vectors) where the $i$-th counter has a non-zero value. Finally, we encode the condition that at least one accepting state in $F$ needs to by seen in $\Tmc^D$ by further adding the component $\{(p, a, q) \in \Delta \mid q \in F\}$.

    We claim that $L_\omega(\Amc) = \bigcup_{D \subseteq \{1, \dots, d\}, f \in F} L(\Amc^D_f)\cdot L_\omega(\Bmc^D_f)$, which by the comment below \Cref{thm:LimitEqualsReach} and \Cref{lem:GTBA-BA} implies the statement of the lemma. 

    \smallskip
    $\Rightarrow$ To show $L_\omega(\Amc) \subseteq \bigcup_{D \subseteq \{1, \dots, d\}, f \in F} L(\Amc^D_f) \cdot L_\omega(\Bmc^D_f)$, let $\alpha \in L_\omega(\Amc)$ with accepting run $r_1 r_2 r_3 \dots$ where $r_i = (p_{i-1}, \alpha_i, \vbf_i, p_i)$. Let $D$ be the positions of $\infty$-entries in $\rho(r) = (v_1, \dots, v_d)$. As the $v_i$ with $i \notin D$ have integer values, there is a position $j$ such that in all $\vbf_k$ for $k \geq j$ the $i$-th entry of $\vbf_k$ is 0. Let $\ell \geq j$ be minimal such that $p_\ell$ in $F$. We split $\alpha = w \beta$, where $w = \alpha[1,\ell]$, and $\beta = \alpha_{\ell + 1} \alpha_{\ell +2} \dots$.
    
    First we argue that $w \in L_\omega(\Amc^D_{p_{\ell}})$. Observe that $\Amc^D_{p_{\ell}}$ inherits all transitions from $\Amc$, hence $r_1 \dots r_{\ell}$ is a run of $\Amc^D_{p_{\ell}}$ on $w$. As $p_\ell$ is accepting by definition, it remains to show that $\rho(r_1, \dots r_\ell) \in C_D$.
    By the choice of $\ell$, all counters not in $D$ have reached their final values. As $C_D$ contains all vectors of $C$ where all $\infty$-entries are replaced by arbitrary values, the claim follows, hence $w \in L(\Amc^D_{p_{\ell}})$.

    Now we argue that $\beta \in L_\omega(\Bmc^D_{p_\ell})$. For every $k > \ell$ define $r'_k = (p_{k-1}, \alpha_k, p_k)$. Observe that $r' = r'_{k+1} r'_{k+2} \dots$ is a run of $\Bmc^D_{p_\ell}$ on $\beta$ (all $r'_{k+1}$ exist in $\Bmc^D_{p_\ell}$, as the counters not on~$D$ of all transitions $r_k$ have zero-values by the definition of $\ell$).
    It remains to show that~$r'$ is accepting, \ie, that for every counter in $D$ at least one transition with a non-zero values is used infinitely often, and an accepting state is visited infinitely often. This is the case, as these counter values are $\infty$ in $\rho(r)$ and by the acceptance condition of LPBA, hence $\beta \in L_\omega(\Bmc^D_{p_\ell})$.

    We conclude $\alpha \in \bigcup_{D \subseteq \{1, \dots, d\}, f \in F} L(\Amc^D_f) \cdot L_\omega(\Bmc^D_f)$. \hfill $\lrcorner$

\smallskip
    $\Leftarrow$ To show $\bigcup_{D \subseteq \{1, \dots, d\}, f \in F} L(\Amc^D_f) \cdot L_\omega(\Bmc^D_f) \subseteq L_\omega(\Amc)$, let $w \in L(\Amc^D_f)$ and $\beta \in L_\omega(\Bmc^D_f)$ for some $D \subseteq \{1, \dots, d\}$ and $f \in F$. We show that $w\beta \in L_\omega(\Amc)$.

    Let $s$ be an accepting run of $\Amc^D_f$ on $w$, which ends in the accepting state $f$ by definition. Let $\rho(s) = (v_1, \dots, v_d)$. By definition of $C_D$, there is a vector $\ubf = (u_1, \dots, u_d)$ in $C$ where $u_i = \infty$ if $i \in D$, and $u_i = v_i$ if $i \notin D$.
    Furthermore, let $r = r_1r_2r_3\dots$, where $r_i = (p_{i-1}, \alpha_i, p_i)$, be an accepting run of $\Bmc^D_f$ on $\beta$, which starts in the accepting state~$f$ by definition. 
    By definition of $\Tmc^d$, for every counter $i \in D$ at least one transition where the $i$-th counter of the corresponding transition in $\Delta$ is non-zero is used infinitely often. 
    Hence, let $r' = r'_1 r'_2 r'_3 \dots$ where $r'_i = (p_{i-1}, \alpha_i, \vbf_i, p_i)$ for a suitable vector $\vbf_i$. 
    Furthermore, the labels of transitions of counters not in $D$ have a value of zero, hence $\rho(r') = (x_1, \dots, x_d)$, where $x_i = \infty$ if $i \in D$, and $x_i = 0$ if $i \notin D$. 
    A technical remark: it might be the case that there are more than one transitions in $\Delta$ that collapse to the same transition in $\Delta^D$, say $\delta_1 = (p, a, \ubf, q)$ and $\delta_2 = (p, a, \vbf, q)$ appear in $\Delta$ and collapse to $(p, a, q)$ in $\Delta^D$. If both transitions, $\delta_1$ and $\delta_2$, are seen infinitely often, we need to take care that we also see both infinitely often when translating the run $r$ back. This is possible using a round-robin procedure.

    Now observe that $sr'$ is a run of $\Amc$ on $w\beta$ (recall that $s$ ends in $f$, and $r'$ starts in $f$). Furthermore, we have $\rho(sr') = \rho(s) + \rho(r') = (v_1 + x_1, \dots, v_d + x_d)$, where $v_i + x_i = \infty$ if $i \in D$, and $v_i + x_i = v_i$ if $i \notin D$ by the observations above. Hence $\rho(sr') \in C$. Finally, $\Tmc^D$ enforces that at least one accepting state in $\Bmc^D_f$ is seen infinitely often, hence $w\beta \in L_\omega(\Amc)$.
\end{proof}

Finally we show the implication $(3) \Rightarrow (1)$.
\begin{lemma}
    If $L$ is RPBA-recognizable, then $L \in \LPAReg$.
\end{lemma}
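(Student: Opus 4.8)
The plan is to exploit the defining feature of an RPBA, namely that the Parikh condition must be met only \emph{once}: after an accepting hit, the only remaining constraint on the run is the Büchi condition, which is $\omega$-regular. Concretely, let $\Amc = (Q, \Sigma, q_0, \Delta, F, C)$ be an RPBA with $B_\omega(\Amc) = L$. For every $f \in F$ I would consider the Parikh automaton $\Amc_f = (Q, \Sigma, q_0, \Delta, \{f\}, C)$ and set $U_f = L(\Amc_f) \cap \Sigma^+$; since Parikh-recognizable languages are closed under intersection with regular languages (or by a direct construction that forbids the empty run), $U_f$ is Parikh-recognizable. Intuitively $U_f$ is the set of non-empty finite words admitting a run of $\Amc$ from $q_0$ to $f$ whose extended Parikh image lies in $C$, that is, the prefixes that can realize an accepting hit ending in $f$. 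For every $f \in F$ I would also consider the Büchi automaton $\Bmc_f = (Q, \Sigma, f, \Delta', F)$, where $\Delta' = \{(p,a,q) \mid (p,a,\vbf,q) \in \Delta\}$ simply forgets the counter labels, and set $V_f = L_\omega(\Bmc_f)$, which is $\omega$-regular by definition.

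The key claim is $L = B_\omega(\Amc) = \bigcup_{f \in F} U_f \cdot V_f$. For the inclusion ``$\subseteq$'', take $\alpha \in B_\omega(\Amc)$ with accepting run $r = r_1 r_2 \dots$, $r_i = (p_{i-1}, \alpha_i, \vbf_i, p_i)$, and let $i \geq 1$ be the position of an accepting hit, so $p_i \in F$ and $\rho(r_1 \dots r_i) \in C$. Put $f = p_i$, $w = \alpha[1,i]$, and $\beta = \alpha_{i+1}\alpha_{i+2}\dots$. Then $r_1 \dots r_i$ is an accepting run of $\Amc_f$ on $w$, and $|w| = i \geq 1$, so $w \in U_f$; moreover, dropping the counter labels from $r_{i+1} r_{i+2}\dots$ yields a run of $\Bmc_f$ on $\beta$ that starts in $f$ and, since $r$ visits $F$ infinitely often, visits $F$ at infinitely many positions $> i$, so $\beta \in V_f$, giving $\alpha = w\beta \in U_f V_f$. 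For ``$\supseteq$'', given $w \in U_f$ and $\beta \in V_f$, pick an accepting run $s$ of $\Amc_f$ on $w$ (ending in $f$ with $\rho(s) \in C$) and an accepting run $t$ of $\Bmc_f$ on $\beta$ (starting in $f$ and visiting $F$ infinitely often), and lift $t$ to a run $t'$ of $\Amc$ on $\beta$ by reinserting into each transition some counter vector that makes it a transition of $\Delta$. Then $st'$ is a run of $\Amc$ on $w\beta$ with an accepting hit at position $|w| \geq 1$ (the state there is $f \in F$ and the extended Parikh image of the prefix is $\rho(s) \in C$), and it visits $F$ infinitely often, so $w\beta \in B_\omega(\Amc)$.

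Finally, the decomposition $B_\omega(\Amc) = \bigcup_{f \in F} U_f V_f$ with each $U_f$ Parikh-recognizable and each $V_f$ $\omega$-regular places $L$ in $\LPAReg$ by the observation below \Cref{thm:LimitEqualsReach}. This argument is considerably simpler than the LPBA case; the only point that needs care is that the accepting hit must occur at a position $i \geq 1$, so the empty word must be excluded from each $U_f$. Otherwise the inclusion ``$\supseteq$'' would fail: if $\varepsilon \in L(\Amc_f)$ (i.e.\ $q_0 = f$ and $\0 \in C$), a continuation starting in the accepting state $f$ need never again satisfy the Parikh condition, so $f$'s Büchi language is generally strictly larger than the set of legitimate continuations. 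Restricting each $U_f$ to $\Sigma^+$ fixes this and does not affect Parikh-recognizability.
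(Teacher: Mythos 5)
Your proof is correct and follows essentially the same route as the paper's: the same decomposition $B_\omega(\Amc) = \bigcup_{f\in F} L(\Amc_f)\cdot L_\omega(\Bmc_f)$, with $\Amc_f$ the PA obtained by making $f$ the unique accepting state and $\Bmc_f$ the label-forgetting Büchi automaton started in $f$, and the same splitting and lifting arguments in the two inclusions. The one place where you go beyond the paper is the restriction of $U_f$ to $\Sigma^+$: the paper states the claim with $L(\Amc_f)$ itself, and its backward direction tacitly assumes the accepting hit occurs at $s_n$ with $n \geq 1$; if $q_0 = f \in F$ and $\0 \in C$, then $\varepsilon \in L(\Amc_f)$ and the inclusion ``$\supseteq$'' can genuinely fail (for instance, the one-state automaton with the single transition $(q_0,a,1,q_0)$, $F=\{q_0\}$, $C=\{0\}$ has $B_\omega(\Amc)=\varnothing$, while $\varepsilon\cdot a^\omega = a^\omega$ lies in the right-hand side), so your explicit exclusion of the empty word is a small but genuine correction rather than an unnecessary precaution.
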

\begin{proof}
    Let $\Amc = (Q, \Sigma, q_0, \Delta, F, C)$ be an RPBA. The intuition is as follows. An RPBA just needs to verify the counters a single time. Hence, we can recognize the prefixes of infinite words $\alpha \in B_\omega(\Amc)$ that generate the accepting hit with a PA. Further checking that an accepting state is seen infinitely often can be done with a Büchi automaton.

    Fix $f \in F$ and let $\Amc_f = (Q, \Sigma, q_0, \Delta, \{f\}, C)$ be the PA that is syntactically equal to $\Amc$ with the only difference that $f$ is the only accepting state. Similarly, let $\Bmc_f = (Q, \Sigma, f, \{(p,a,q) \mid (p,a,\vbf, q) \in \Delta\}, F)$ be the Büchi automaton obtained from $\Amc$ by setting $f$ as the initial state and the forgetting the vector labels. 
    
    We claim that $B_\omega(\Amc) = \bigcup_{f \in F} L(\Amc_f) \cdot L_\omega(\Bmc_f)$.

\smallskip
    $\Rightarrow$ To show $B_\omega(\Amc) \subseteq \bigcup_{f \in F} L(\Amc_f) \cdot L_\omega(\Bmc_f)$, let $\alpha \in B_\omega(\Amc)$ with accepting run $r = r_1 r_2 r_3 \dots$ where $r_i = (p_{i-1}, \alpha_i, \vbf_i, p_i)$. Let $k$ be arbitrary such that there is an accepting hit in $r_k$ (such a $k$ exists by definition) and consider the prefix $\alpha[1,k]$. Obviously $r_1 \dots r_k$ is an accepting run of $\Amc_{p_k}$ on $\alpha[1,k]$.
    Furthermore, there are infinitely many $j$ such that $p_j \in F$ by definition. In particular, there are also infinitely many $j \geq k$ with this property. Let $r'_i = (p_{i-1}, \alpha_i, p_i)$ for all $i > k$.  
    Then $r'_{k+1} r'_{k+2} \dots$ is an accepting run of $\Bmc_{p_k}$ on $\alpha_{k+1} \alpha_{k+2}\dots$ (recall that $p_k$ is the initial state of $\Bmc_{p_k}$). Hence we have $\alpha[1,k] \in L(\Amc_{p_k})$ and $\alpha_{k+1} \alpha_{k+2} \dots \in L_\omega(\Bmc_{p_k})$.

    \smallskip
    $\Leftarrow$ To show $\bigcup_{f \in F} L(\Amc_f) \cdot L_\omega(\Bmc_f) \subseteq B_\omega(\Amc)$, let $w \in L(\Amc_f)$ and $\beta \in L_\omega(\Bmc_f)$ for some $f \in F$. We show $w\beta \in B_\omega(\Amc)$.
    Let $s = s_1 \dots s_n$ be an accepting run of $\Amc_f$ on $w$, which ends in the accepting state $f$ with $\rho(s) \in C$ by definition.
    Furthermore, let $r = r_1 r_2 r_3 \dots$ be an accepting run of $\Bmc^D_f$ on $\beta$ which starts in the accepting state $f$ by definition. It is now easily verified that $sr'$ with $r' = r'_1r'_2r'_3\dots$ where $r'_i = (p_{i-1}, \alpha_i, \vbf_i, p_i)$ (for an arbitrary $\vbf_i$ such that $r'_i \in \Delta)$ is an accepting run of $\Amc$ on $w\beta$, as there is an accepting hit in $s_n$, and the (infinitely many) visits of an accepting state in $r$ translate one-to-one, hence $w\beta \in B_\omega(\Amc)$.
\end{proof}

\section{Characterization of \LPAPA and \LRegPA}
In this section we give a characterization of $\LPAPA$ and a characterization of $\LRegPA$. Grobler et al.~\cite{infiniteOurs} have shown that $\LPAPA \subsetneq \LSPBA$, \ie, SPBA are too strong to capture this class. However, restrictions of SPBA are a good candidate to capture $\LPAPA$ as well as $\LRegPA$. In fact we show that it is sufficient to restrict the appearances of accepting states to capture $\LPAPA$, as specified by the first theorem of this section. Further restricting the vectors yields a model capturing $\LRegPA$, as specified in the second theorem of this section. Recall that the condensation of $\Amc$ is the DAG of strong components of the underlying graph of $\Amc$. 

\begin{theorem}
\label{thm:lpapa}
    The following are equivalent for all $\omega$-languages $L \subseteq \Sigma^\omega$.
    \begin{enumerate}
        \item $L$ is of the form $\bigcup_i U_i V_i^\omega$, where $U_i, V_i \subseteq \Sigma^*$ are Parikh-recognizable.
        \item $L$ is recognized by an SPBA $\Amc$ with the property that accepting states appear only in the leaves of the condensation of $\Amc$, and there is at most one accepting state per leaf.
    \end{enumerate}
\end{theorem}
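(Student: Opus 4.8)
The plan is to prove both directions by carefully transferring between the decomposition $\bigcup_i U_i V_i^\omega$ and the structural normal form of the SPBA, using the structure of the condensation.

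For $(1) \Rightarrow (2)$: Since $\LSPBA$ is closed under union (a standard disjoint-union construction with a fresh initial state, where the condensation property is preserved because the new initial SCC has no accepting state and we take disjoint copies otherwise), it suffices to handle $L = UV^\omega$ for Parikh-recognizable $U, V$. Let $\Amc_U = (Q_U, \Sigma, q_U, \Delta_U, F_U, C_U)$ recognize $U$ and $\Amc_V = (Q_V, \Sigma, q_V, \Delta_V, F_V, C_V)$ recognize $V$. Assume $\varepsilon \notin V$ (otherwise $V^\omega$ still only concatenates nonempty words by definition of $L^\omega$, so we may delete the possibility of reading $\varepsilon$; a minor technicality). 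I build an SPBA that first simulates $\Amc_U$ (no accepting states here), then from each $f \in F_U$ enters a gadget that repeatedly simulates $\Amc_V$: a single fresh accepting state $q_*$ serves as the ``hub'' of the loop. A run enters $q_*$, then simulates $\Amc_V$ on a block $w_j \in V$ using the counters of $\Amc_V$ (tested against $C_V$ exactly when returning to $q_*$), with a reset at each return. The crucial point: every return to $q_*$ must occur exactly at an accepting configuration of $\Amc_V$ with Parikh image in $C_V$, so the SPBA's Parikh condition between consecutive accepting visits checks precisely one $V$-block. The only accepting state is $q_*$; it lies in the single SCC containing all the $\Amc_V$-simulation states plus $q_*$ (everything is reachable from $q_*$ and $q_*$ is reachable back), and this SCC is a leaf of the condensation since there are no transitions leaving it. The $\Amc_U$-part and the connecting transitions form SCCs strictly above it. Hence the condensation has exactly one leaf, it contains exactly one accepting state, and correctness ($S_\omega(\Amc) = UV^\omega$) follows: an accepting run reads a prefix in $U$ (ending at some $f \in F_U$, but note the counters of $\Amc_U$ must be in $C_U$ at the \emph{first} accepting visit, i.e.\ at $q_*$—so I route the $C_U$-test into the first segment and the $C_V$-test into all subsequent segments, which requires a product-style counter vector $C_U \cdot C_V$ on the first block and $\{\0\} \cdot C_V$-compatible handling afterward; this bookkeeping is the fiddly part).

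For $(2) \Rightarrow (1)$: Let $\Amc$ be an SPBA with accepting states only in the leaves of the condensation, at most one per leaf. Enumerate the leaves $S_1, \dots, S_m$ that contain an accepting state $f_t \in S_t$. For each $t$, define $U_t$ to be the set of finite words $w$ on which $\Amc$ has a partial run from $q_0$ to $f_t$ whose Parikh image lies in $C$; this is Parikh-recognizable via the PA $(Q, \Sigma, q_0, \Delta, \{f_t\}, C)$. Define $V_t$ to be the set of finite words $v$ such that $\Amc$ has a run from $f_t$ back to $f_t$ with Parikh image in $C$ that stays within $S_t$ and visits $f_t$ only at the two endpoints; this is also Parikh-recognizable. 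Then I claim $S_\omega(\Amc) = \bigcup_t U_t V_t^\omega$. The key observation for ``$\subseteq$'': since $f_t$ is the \emph{only} accepting state in the leaf and accepting states occur only in leaves, any accepting run of $\Amc$ (which visits accepting states infinitely often and hence eventually stays in one leaf $S_t$—because the condensation is a DAG, a run eventually settles in a single SCC, which must be a leaf, and to visit accepting states infinitely often that leaf must be some $S_t$) decomposes: a prefix reaching the first visit to $f_t$ (in $U_t$, by the reset/Parikh condition), then an infinite concatenation of $f_t$-to-$f_t$ loops each in $V_t$ (the reset condition forces each segment between consecutive $f_t$-visits to have Parikh image in $C$, and once in the leaf these segments cannot leave $S_t$). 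For ``$\supseteq$'', concatenating a $U_t$-witness run with infinitely many $V_t$-witness runs yields a valid accepting run of $\Amc$ since each segment satisfies the Parikh condition after reset and $f_t$ is visited infinitely often.

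\textbf{The main obstacle} I anticipate is the direction $(2) \Rightarrow (1)$, specifically justifying that an accepting run \emph{eventually stays within a single leaf SCC} and that the first visit to the relevant $f_t$ cleanly separates the ``$U_t$ prefix'' from the ``$V_t$ loop'' part—one must rule out the run wandering between the leaf and higher SCCs infinitely often, which is where the DAG structure of the condensation is essential (from some point on the run is confined to one SCC, and it must be a leaf, and it must contain infinitely many accepting visits). A secondary subtlety is handling $\varepsilon$: the definition $L^\omega = \{w_1 w_2 \dots \mid w_i \in L \setminus \{\varepsilon\}\}$ means $V_t$-segments are automatically nonempty if $f_t$ has no self-loop, but if a single transition $(f_t, a, \vbf, f_t) \in \Delta$ with $\vbf \in C$ exists, that one-letter word belongs to $V_t$ and must be allowed; conversely I must ensure $\varepsilon \notin V_t$ as required, which holds since a loop from $f_t$ to $f_t$ visiting $f_t$ only at endpoints uses at least one transition. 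The union-closure and product-counter bookkeeping in $(1) \Rightarrow (2)$ is routine but must be written carefully so the ``one accepting state per leaf'' condition is literally met.
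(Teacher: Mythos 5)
The direction $(2) \Rightarrow (1)$ has a genuine gap: your $U_t$ is the naive PA $(Q, \Sigma, q_0, \Delta, \{f_t\}, C)$, which does not forbid intermediate visits to $f_t$, and with this definition the claimed equality $S_\omega(\Amc) = \bigcup_t U_t V_t^\omega$ is simply false, so the ``$\supseteq$'' gluing step cannot be repaired by wording alone. Concretely, take states $q_0, f, g$ with $f$ accepting, dimension $1$, transitions $(q_0, a, 0, f)$, $(f, b, 1, g)$, $(g, b, 1, f)$, and $C = \{2\}$. The unique leaf of the condensation is $\{f, g\}$ and it contains the unique accepting state $f$, so the structural hypothesis of the theorem holds. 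Every infinite run begins with $(q_0, a, 0, f)$, so its first accepting visit carries Parikh image $0 \notin C$ and the reset condition fails: $S_\omega(\Amc) = \varnothing$. Yet $abb$ lies in your $U_f$ (run $q_0 \to f \to g \to f$ with total Parikh image $2 \in C$, passing through $f$ in the middle with partial image $0$) and $bb \in V_f$, so your right-hand side contains $ab^\omega$. The point is that under SPBA semantics every visit to $f_t$ triggers a reset and a test, so a $U_t$-witness run that meets $f_t$ prematurely with a bad partial image does not glue into an accepting run, and no other run need exist. The repair is exactly the device the paper uses (and which you imposed on $V_t$ but not on $U_t$): add one extra coordinate that is $1$ on every transition entering $f_t$ and $0$ elsewhere, and test against $C \cdot \{1\}$, so that accepted $U_t$- and $V_t$-words have runs entering $f_t$ exactly once, at the last position; with that change both inclusions go through (the ``$\subseteq$'' direction is unaffected, since segments between consecutive accepting visits never meet $f_t$ internally). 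Incidentally, your unproved claim that $V_t$ as described is Parikh-recognizable is established by the same extra-counter construction.

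The direction $(1) \Rightarrow (2)$ is morally the paper's argument (the hub state $q_*$ plays the role of the IO-normalized automaton $\Amc^{IO}_V$, whose initial state is the unique accepting state and whose single SCC becomes the unique accepting leaf), but the part you yourself flag as ``fiddly'' is precisely where the work lies and is left unresolved: an SPBA tests the \emph{same} semi-linear set at every reset, so a naive choice such as $C = C_U \cdot C_V \cup \{\0\} \cdot C_V$ lets the first segment pass through the second component whenever the simulated $\Amc_U$-run happens to have zero Parikh image even though $\0 \notin C_U$, which can accept words whose prefix is not in $U$. This is again fixed by an indicator coordinate marking returns to $q_*$ (the paper's $C \cdot \{1\}$ in the IO-normalization, combined with the concatenation construction of Grobler et al.), so this direction is repairable, but as written it is incomplete at exactly the step the paper's \Cref{lem:AIO} is designed to handle.
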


In fact, the proof of Grobler et al.~\cite{infiniteOurs} showing $\LPAPA \subseteq \LSPBA$ is constructive and \emph{almost} yields an SPBA with the desired property. A key notion are \emph{normalized} PA (on finite words), where a PA $\Amc = (Q, \Sigma, q_0, \Delta, F)$ is normalized if $F = \{f\}$ and $f$ has no outgoing transitions. It was shown that one can, given a PA $\Amc$, construct a normalized PA $\Amc_N$ with $L(\Amc_N) = L(\Amc) \setminus \{\varepsilon\}$.
For our proofs it is convenient to introduce a similar, yet stronger notion.
\begin{lemma}\label{lem:AIO}
Let $\Amc = (Q, \Sigma, q_0, \Delta, F, C)$ be a PA of dimension $d$. Then there exists a~PA~$\Amc^{IO}$ of dimension $d + 1$ with the following properties.
\begin{itemize}
    \item The initial state of $\Amc^{IO}$ is the only accepting state.
    \item $L(\Amc) \setminus \{\varepsilon\} = L(\Amc^{IO}) \setminus \{\varepsilon\}$.
    \item $SCC(\Amc) = \{Q\}$.
\end{itemize}
We say that $\Amc^{IO}$ is \emph{IO-normalized}.
\end{lemma}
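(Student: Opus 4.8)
The plan is to derive $\Amc^{IO}$ from a normalized PA and then repair two defects. First, if $L(\Amc)\setminus\{\varepsilon\}=\varnothing$, I would take the one-state PA whose unique state is initial and accepting, with no transitions and $C=\varnothing$ (in dimension $d+1$), which already has all three properties; so assume $L(\Amc)\setminus\{\varepsilon\}\neq\varnothing$. Using the quoted normalization result I would fix a normalized PA $\Amc_N=(Q,\Sigma,s,\Delta_N,\{f\},C)$ of dimension $d$ with $L(\Amc_N)=L(\Amc)\setminus\{\varepsilon\}$, noting that $s\neq f$ (otherwise $\Amc_N$ would accept no nonempty word). I would then delete every state that is not reachable from $s$, and every state that cannot reach $f$; neither deletion changes the language or normalizedness, and afterwards every state of $Q$ is reachable from $s$ and can reach $f$ within $\Amc_N$.

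Now I would build $\Amc^{IO}$ on the same state set $Q$, with $f$ as the initial and the only accepting state, of dimension $d+1$, and with the following transitions: (a) every transition of $\Delta_N$, carrying in the new $(d+1)$-th counter the value $1$ if it enters $f$ and $0$ otherwise; (b) for every transition of $\Delta_N$ leaving $s$, a copy of it leaving $f$ instead, with the same target, letter, and counter values as in (a); and (c) for every state $p$ and letter $a$, a ``poison'' transition from $p$ to $s$ reading $a$ that leaves the first $d$ counters unchanged and adds $2$ to the new one. The acceptance set is $C$ with a $1$ appended to each of its vectors. Intuitively, $f$ now also plays the role of $s$ through the copied transitions, so that any run of $\Amc^{IO}$ from $f$ back to $f$ traces out a concatenation of one or more accepting runs of $\Amc_N$ (each from $s$ to $f$); the $(d+1)$-th counter increases by exactly $1$ each time one such run is completed, which is precisely when a transition enters $f$, so demanding that its final value be $1$ forces the $\Amc^{IO}$-run to consist of a single accepting run of $\Amc_N$. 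The poison transitions make $s$ — hence, via reachability inside $\Amc_N$, every state — reachable from $f$, which is what makes the underlying graph of $\Amc^{IO}$ strongly connected; they never appear on an accepting run, since they drive the new counter to $2$.

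Finally I would check $L(\Amc^{IO})=L(\Amc_N)$ and that the underlying graph of $\Amc^{IO}$ is strongly connected, \ie, $SCC(\Amc^{IO})=\{Q\}$. The inclusion $L(\Amc_N)\subseteq L(\Amc^{IO})$ is immediate: take an accepting run $s\to\dots\to f$ of $\Amc_N$ on a (necessarily nonempty) word $w$, replace its first transition by the corresponding copied transition of type (b), keep the rest as transitions of type (a), and observe that the resulting run of $\Amc^{IO}$ ends in $f$, carries the original value of $\rho$ in its first $d$ counters (which lies in $C$), and carries $1$ in the new one. Strong connectivity is also routine: every state reaches $f$ via transitions of type (a) (inherited from $\Amc_N$ after the cleanup), and $f$ reaches $s$ via a poison transition and therefore reaches every state. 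The step I expect to be the main obstacle is the reverse inclusion $L(\Amc^{IO})\subseteq L(\Amc_N)$: given an accepting run of $\Amc^{IO}$ on $w$, I must argue from the fact that its $(d+1)$-th counter equals $1$ that it uses no poison transition and exactly one transition entering $f$, and then — using crucially that the only transitions leaving $f$ are the copied ones — that $f$ can only be entered at the very last step. This rules out runs that thread through $f$ several times, and runs that end at $f$ without having just completed an $\Amc_N$-run; once that is established, renaming the initial state $f$ back to $s$ turns the run into a genuine accepting run of $\Amc_N$ on the nonempty word $w$. Getting this asymmetry argument right — ``only copied transitions leave $f$, and only $f$-entering transitions are marked'' — is the crux of the whole proof.
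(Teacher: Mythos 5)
Your construction is correct, and it differs from the paper's in how it is assembled. Both proofs hinge on the same central trick: a single merged initial/accepting state together with a fresh $(d+1)$-th counter that is incremented by $1$ exactly on transitions entering that state, and the semi-linear set $C \cdot \{1\}$, so that an accepting run enters the merged state exactly once, namely at its last step. The paper applies this directly to the given $\Amc$: it adds a fresh state $q_0'$ that inherits the outgoing transitions of $q_0$ and shortcuts the ingoing transitions of all states of $F$, and then obtains a single SCC by deleting the states that cannot reach $q_0'$. You instead route through the quoted normalization lemma (unique final state $f$ without outgoing transitions), trim states that are not reachable from $s$ or cannot reach $f$, reuse $f$ itself as the merged state by copying $s$'s outgoing transitions to $f$, and enforce strong connectivity not by pruning but by adding ``poison'' transitions back to $s$ whose new-counter value $2$ makes them unusable on accepting runs, plus a separate one-state automaton for the corner case $L(\Amc)\setminus\{\varepsilon\}=\varnothing$. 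Your reverse-inclusion sketch (counter value $1$ excludes poison transitions and forces exactly one entry into $f$, at the very end, so only a type-(b) transition can start the run and only $\Delta_N$-transitions can follow) is exactly the right argument and goes through. The trade-off: the paper's version is self-contained and adds only one state, while your poison-edge device makes the single-SCC property hold by construction rather than via deletion (and in fact handles reachability \emph{from} the merged state more explicitly than the paper does), at the price of an extra case distinction and the detour through normalization.
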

\begin{proof}
    Define $\Amc^{IO} = \{Q \cup \{q_0'\}, \Sigma, q_0', \Delta^{IO}, \{q_0'\}, C \cdot \{1\})$, where 
    \begin{align*}
    \Delta^{IO} =&\ \{(p, a, \vbf \cdot 0, q) \mid (p, a, \vbf, q) \in \Delta\} \\
    \cup&\ \{(q_0', a, \vbf \cdot 0, q) \mid (q_0, a, \vbf, q) \in \Delta\} \\ \cup&\ \{(p, a, \vbf \cdot 1, q_0') \mid (p, a, \vbf, f) \in \Delta, f \in F\} \\
    \cup&\ \{(q_0', a, \vbf \cdot 1, q_0') \mid (q_0, a, \vbf, f) \in \Delta, f \in F\}.
    \end{align*}
    
    That is, $\Amc^{IO}$ is obtained from $\Amc$ by adding a fresh state $q_0'$, which is the initial state and only accepting state, inherits all outgoing transitions from $q_0$ and all in-going transitions from the accepting states. Furthermore, all transitions get a new counter, which is set to 0 except for the new ingoing transitions of $q_0'$ where the counter is set to $1$, and all vectors in $C$ are concatenated with $1$. 
    Finally, we remove all states that cannot reach $q'_0$ (such states can appear when shortcutting the ingoing transitions of $F$, and are useless in the sense that their removal does not change the accepted language; however, this removal is necessary for the third property). 
    We claim that $L(\Amc) \setminus \{\varepsilon\} = L(\Amc^{IO}) \setminus \{\varepsilon\}$.

    \smallskip
    $\Rightarrow$ To show $L(\Amc) \setminus \{\varepsilon\} \subseteq L(\Amc^{IO}) \setminus \{\varepsilon\}$, let $w_1 \dots w_n \in L(\Amc)$ for $n \geq 1$ with accepting run $r = r_1 \dots r_n$ where $r_i = (p_{i-1}, w_i, \vbf_i, p_i)$.
    By definition of $\Delta^{IO}$, there is a transition $r'_1 = (q_0', w_1, \vbf_1 \cdot 0, p_1)$ as well as a transition $r'_n = (p_{n-1}, w_n, \vbf_n \cdot 1, q_0')$ (or in case $n = 1$ the loop $(q_0', w_1, \vbf_1 \cdot 1, q_0')$). For all $1 < i < n$ define $r'_i = (p_{i-1}, w_i, \vbf_i \cdot 0, p_i)$. It is now easily verified that $r'_1 \dots r'_n$ (or simply $(q_0', w_1, \vbf_1 \cdot 1, q_0')$) is an accepting run of $\Amc^{IO}$ on $w_1 \dots w_n$.

    \smallskip
    $\Leftarrow$ To show $L(\Amc^{IO}) \setminus \{\varepsilon\} \subseteq L(\Amc) \setminus \{\varepsilon\}$, let $w_1 \dots w_n \in L(\Amc^{IO})$ for $n \geq 1$ with accepting run $r' = r'_1 \dots r'_n$ where $r'_i = (p_{i-1}, w_i, \vbf_i \cdot c_i, p_i)$ with $c_i \in \{0,1\}$.
    Observe that $p_0 = p_n = q_0'$, and for all $0 < i < n$ we have $p_i \neq q_0'$ as enforced by the additional counter (that is, $c_n = 1$ and $c_i = 0$ for all $i < n$, as $C \cdot \{1\}$ is the semi-linear set of~$\Amc^{IO}$).
    By definition of $\Delta^{IO}$ there is a transition $r_1 = (q_0, w_1, \vbf_1, p_1)$, and a transition $r_n = (p_{n-1}, w_n, \vbf_n, f)$ for some $f \in F$ in $\Delta$ (or in case $n = 1$ the transition $(q_0, w_1, \vbf_1, f)$). For all $1 < i < n$ define $r_i = (p_{i-1}, w_i, \vbf_i, p_i)$. It is now easily verified that $r_1 \dots r_n$ (or simply $(q_0, w_1, \vbf_1, f)$) is an accepting run of $\Amc$ on $w_1 \dots w_n$.
\end{proof}
Observe that $L(\Amc)^\omega = L(\Amc^{IO})^\omega$ for every PA $\Amc$, as $L^\omega = (L \setminus \{\varepsilon\})^\omega$ for every language~$L$ by definition. In fact, it is easily observed that we even have $S_\omega(\Amc^{IO}) = L(\Amc)^\omega$. We are now ready to proof the main theorem.

\begin{proof}[Proof of \Cref{thm:lpapa}]
$(1) \Rightarrow (2)$. Let $\Amc_i = (Q_i, \Sigma, q_i, \Delta_i, F_i)$ for $i \in \{1,2\}$ be PA and let $L = L(\Amc_1) \cdot L(\Amc_2)^\omega$. By \Cref{lem:AIO} and the observation above we can equivalently write $L = L(\Amc_1) \cdot S_\omega(\Amc^{IO}_2)$. 
As $\Amc^{IO}_2$ is IO-normalized it satisfies the property of the theorem.

We can now easily adapt the construction in \cite{infiniteOurs} showing that the concatenation of a Parikh-recognizable language and an SPBA-recognizable $\omega$-language is SPBA-recognizable to obtain an SPBA for $L(\Amc_1) \cdot S_\omega(\Amc^{IO}_2)$ that only keeps the accepting state of $\Amc^{IO}_2$, maintaining the property of the theorem. Finally, the closure under union is shown using a standard construction, hence combining SPBA with the desired property still yields an SPBA with the property. 
Overall, we obtain an SPBA $\Amc$ recognizing $L$ where the only accepting states appear in the leaves of $C(\Amc)$. \hfill$\lrcorner$

\medskip
$(2) \Rightarrow (1)$. Let $\Amc = (Q, \Sigma, q_0, \Delta, F, C)$ be an SPBA of dimension $d$ with the property of the theorem. 
Let $f \in F$ and let $\Amc_f = (Q, \Sigma, q_0, \Delta_f, \{f\}, C \cdot \{1\})$ with $\Delta_{f} = \{p,a,\vbf \cdot 0,q) \mid (p,a,\vbf, q) \in \Delta, q \neq f\} \cup \{(p, a, \vbf \cdot 1, f) \mid (p, a, \vbf, f) \in \Delta\}$ be the PA of dimension $d+1$ obtained from $\Amc$ by setting $f$ as the only accepting state with an additional counter that is 0 at every transition except of the ingoing transitions of $f$, where the counter is set to 1. 
Additionally all vectors in $C$ are concatenated with $1$. Similarly, let $\Amc_{f,f} = (Q, \Sigma, f, \Delta_{f}, \{f\}, C \cdot \{1\})$ be the PA of dimension $d+1$ obtained from $\Amc$ by setting $f$ as the initial state and only accepting state, where $\Delta_f$ is defined as for $\Amc_f$. We claim $S_\omega(\Amc) = \bigcup_{f \in F} L(\Amc_{f}) \cdot L(\Amc_{f,f})^\omega$.

\smallskip
$\Rightarrow$ To show $S_\omega(\Amc) \subseteq \bigcup_{f \in F} L(\Amc_{f}) \cdot L(\Amc_{f,f})^\omega$, let $\alpha \in S_\omega(\Amc)$ with accepting run $r = r_1 r_2 r_3 \dots$ where $r_i = (p_{i-1}, \alpha_i, \vbf_i, p_i)$. Let $k_1 < k_2 < \dots$ be the positions of accepting states in $r$, \ie, $p_{k_i} \in F$ for all $i \geq 1$.
First observe that the property in the theorem implies $p_{k_i} = p_{k_j}$ for all $i, j \geq 1$, \ie, no two distinct accepting states appear in~$r$, since accepting states appear only in different leaves of the condensation of $\Amc$. 
%However, it is not possible to reach $p_{k_i}$ from $p_{k_j}$ (or vice versa) by the definition of SCC and leaf. This is a contradiction, hence $p_{k_i} = p_{k_j}$ for all $i,j \geq 1$.

For all $j \geq 1$ define $r'_j = (p_{j-1}, \alpha_j, \vbf_j \cdot 0, p_j)$ if $j \neq k_i$ for all $i \geq 1$, and $r'_j = (p_{j-1}, \alpha_j, \vbf_j \cdot 1, p_j)$ if $j = k_i$ for some $i \geq 1$, \ie, we replace every transition $r_j$ by the corresponding transition in $\Delta_f$.

Now consider the partial run $r_1 \dots r_{k_1}$ and observe that $p_i \neq p_{k_1}$ for all $i < k_1$, and $\rho(r_1 \dots r_{k_1}) \in C$ by the definition of SPBA. Hence $r' = r'_1 \dots r'_{k_1}$ is an accepting run of~$\Amc_{p_{k_1}}$ on $\alpha[1, k_1]$, as only a single accepting state appears in $r'$, the newly introduced counter has a value of $1$ when entering $p_{k_1}$, \ie, $\rho(r') \in C \cdot \{1\}$, hence $\alpha[1, k_1] \in L(\Amc_{p_{k_1}})$.

Finally, we show that $\alpha[k_i + 1, k_{i+1}] \in L(\Amc_{p_{k_1},p_{k_1}})$.
Observe that $r'_{k_i + 1} \dots r'_{k_{i+1}}$ is an accepting run of $\Amc_{p_{k_1},p_{k_1}}$ on $\alpha[k_1 + 1, k_{i+1}]$: we have $\rho(r_{k_i + 1} \dots r_{k_{i+1}}) = \vbf \in C$ by definition. Again, as only a single accepting state appears in $r'_{k_i + 1} \dots r_{k_{i+1}}$, we have $\rho(r'_{k_i + 1} \dots r_{k_{i+1}}) = \vbf \cdot 1 \in C \cdot \{1\}$, and hence $\alpha[k_1 + 1, k_{i+1}] \in L(\Amc_{p_{k_1},p_{k_1}})$. We conclude $\alpha \in L(\Amc_{p_{k_1}}) \cdot L(\Amc_{p_{k_1}, p_{k_1}})^\omega$.

\smallskip
$\Leftarrow$ To show $\bigcup_{f \in F} L(\Amc_{f}) \cdot L(\Amc_{f,f})^\omega \subseteq S_\omega(\Amc)$, let $u \in L(\Amc_{f})$, and $v_1, v_2, \dots \in L(\Amc_{f,f})$ for some $f \in F$. We show that $uv_1v_2 \dots \in S_\omega(\Amc)$.

First let $u = u_1 \dots u_n$ and $r' = r'_1 \dots r'_n$ with $r'_i = (p_{i-1}, u_i, \vbf_i \cdot c_i, p_i)$, where $c_i \in \{0,1\}$, be an accepting run of $\Amc_{f}$ on $u$. 
Observe that $\rho(r') \in C \cdot \{1\}$, hence $\sum_{i \leq n} c_i = 1$, \ie,~$p_n$ is the only occurrence of an accepting state in $r'$ (if there was another, say $p_j$, then $c_j = 1$ by the choice of $\Delta_f$, hence $\sum_{i \leq n} c_i > 1$, a contradiction).
For all  $1 \leq i \leq n$ let $r_i = (p_{i-1}, u_i, \vbf_i, p_i)$. Then $r_1 \dots r_n$ is a partial run of $\Amc$ on $w$ with $\rho(r_1 \dots r_n) \in C$ and $p_n = f$.

Similarly, no run of $\Amc_{f,f}$ on any $v_i$ visits an accepting state before reading the last symbol, hence we continue the run from $r_n$ on $v_1, v_2, \dots$ using the same argument. Hence $uv_1v_2 \dots \in S_\omega(\Amc)$, concluding the proof.
\end{proof}

As a side product of the proof of \Cref{thm:lpapa} we get the following corollary, which is in general not true for SPBA.
\begin{corollary}
Let $\Amc = (Q, \Sigma, q_0, \Delta, F, C)$ be an SPBA with the property that accepting states appear only in the leaves of the condensation of $\Amc$, and there is at most one accepting state per leaf. Then we have $S_\omega(\Amc) = \bigcup_{f \in F} S_\omega(Q, \Sigma, q_0, \Delta, \{f\}, C)$.
\end{corollary}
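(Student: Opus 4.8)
The plan is to reuse the single structural fact that already powered the $(2)\Rightarrow(1)$ direction of \Cref{thm:lpapa}: under the hypothesis, \emph{any} run of $\Amc$ visits at most one accepting state. Indeed, if a run visited two distinct accepting states $f \neq f'$, then since each of them lies in a leaf of the condensation $C(\Amc)$ and no leaf contains two accepting states, $f$ and $f'$ would lie in two distinct leaf SCCs. But once a run enters a leaf SCC it can never leave it (a leaf has no outgoing edge in $C(\Amc)$), so after reaching the SCC of $f$ the run can never reach the SCC of $f'$, a contradiction. I would state and prove this as the first step.

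For the inclusion $S_\omega(\Amc) \subseteq \bigcup_{f \in F} S_\omega(Q, \Sigma, q_0, \Delta, \{f\}, C)$, I would take an accepting run $r = r_1 r_2 \dots$ of $\Amc$ on some $\alpha$, with $r_i = (p_{i-1}, \alpha_i, \vbf_i, p_i)$. By the SPBA acceptance condition the set $\{i : p_i \in F\}$ is infinite; let $k_1 < k_2 < \dots$ enumerate it. By the structural fact, all the states $p_{k_j}$ coincide with a single accepting state $f$. Hence, in the automaton $(Q, \Sigma, q_0, \Delta, \{f\}, C)$ — which has exactly the same transitions and differs only in its accepting-state set — the positions of visits to the (unique) accepting state are \emph{precisely} $k_1 < k_2 < \dots$, so the reset segments are unchanged and the conditions $\rho(r_{k_{j-1}+1}\cdots r_{k_j}) \in C$ carry over verbatim. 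Thus $r$ is an accepting run of $(Q, \Sigma, q_0, \Delta, \{f\}, C)$, giving $\alpha \in S_\omega(Q, \Sigma, q_0, \Delta, \{f\}, C)$.

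For the reverse inclusion, fix $f \in F$ and an accepting run $r$ of $(Q, \Sigma, q_0, \Delta, \{f\}, C)$ on some $\alpha$; let $k_1 < k_2 < \dots$ be the (infinitely many) positions where $r$ visits $f$, satisfying the Parikh condition on each segment. I would argue that $r$ visits no accepting state other than $f$: if $r$ visited some $f' \neq f$ at a position $m$, then after $m$ the run is trapped in the leaf SCC of $f'$ and can never return to $f$, contradicting that $f$ is visited infinitely often. Hence the set $\{i : p_i \in F\}$ along $r$ is exactly $\{k_1, k_2, \dots\}$, so $r$ is already an accepting run of $\Amc$ and $\alpha \in S_\omega(\Amc)$.

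I do not expect a genuine obstacle here; the only point requiring care is the bookkeeping that the reset positions $k_j$ are the same whether acceptance is measured against $F$ or against $\{f\}$, and this is exactly where the ``at most one accepting state per leaf'' hypothesis is essential. Without it, two accepting states sharing an SCC could be visited in alternation: the reset positions along $r$ in $\Amc$ would then be strictly finer than in either single-state restriction, the segment-wise Parikh conditions would no longer match, and the stated equality would fail — which is precisely the content of the remark that the corollary is not true for SPBA in general.
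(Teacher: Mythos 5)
Your proof is correct and rests on exactly the structural observation the paper extracts in the proof of \Cref{thm:lpapa} (an accepting run can visit at most one accepting state, since distinct accepting states lie in distinct leaves of the condensation and a run can never leave a leaf SCC); the paper states the corollary as a side product of that proof rather than proving it separately. Your direct run-level verification, including the check in the reverse inclusion that an accepting run of the $\{f\}$-restricted automaton cannot pass through any other accepting state of $\Amc$, is precisely the bookkeeping the paper leaves implicit.
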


By even further restricting the power of SPBA, we get the following characterization of $\LRegPA$.

\begin{theorem}
\label{thm:lregpa}
    The following are equivalent for all $\omega$-languages $L \subseteq \Sigma^\omega$.
    \begin{enumerate}
        \item $L$ is of the form $\bigcup_i U_i V_i^\omega$, where $U_i \subseteq \Sigma^*$ is regular and $V_i \subseteq \Sigma^*$ is Parikh-recognizable.
        \item $L$ is recognized by an SPBA $\Amc$ with the following properties.
        \begin{enumerate}
            \item At most one state $q$ per leaf of the condensation of $\Amc$ may have ingoing transitions from outside the leaf, this state $q$ is the only accepting state in the leaf, and there are no accepting states in non-leaves.
            \item only transitions connecting states in a leaf may be labeled with a non-zero vector. 
        \end{enumerate}
    \end{enumerate}
\end{theorem}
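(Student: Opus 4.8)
The plan is to mimic the structure of the proof of \Cref{thm:lpapa}, tracking the additional restriction that the $U_i$ are merely regular (not Parikh-recognizable), which should correspond to forbidding non-zero counter labels outside the leaves. For the direction $(1) \Rightarrow (2)$, I would start with a single language $UV^\omega$ where $U$ is regular and $V$ is Parikh-recognizable; closure under union is handled afterwards by the same standard construction as in \Cref{thm:lpapa}, noting that a disjoint union of automata satisfying~(a) and~(b) still satisfies~(a) and~(b) (the leaves of the condensation of a disjoint union are exactly the union of the leaves of the components). Take an NFA $\Bmc_1$ for $U$ and a PA $\Bmc_2$ for $V$; apply \Cref{lem:AIO} to $\Bmc_2$ to obtain an IO-normalized PA $\Bmc_2^{IO}$ with $S_\omega(\Bmc_2^{IO}) = V^\omega$, whose unique accepting state is its initial state and whose underlying graph is a single SCC. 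Then glue $\Bmc_1$ in front of $\Bmc_2^{IO}$ by the standard concatenation construction: redirect each transition of $\Bmc_1$ into a final state of $\Bmc_1$ to instead copy the initial transitions of $\Bmc_2^{IO}$, and label every transition coming from $\Bmc_1$ with the zero vector. The resulting SPBA $\Amc$ has the whole copy of $\Bmc_2^{IO}$ as a single leaf of its condensation (all other states, those of $\Bmc_1$, reach it, so they lie in non-leaf SCCs), the unique accepting state sits in that leaf and is the only state of the leaf with ingoing edges from outside, and all non-zero vectors live strictly inside the leaf. This gives properties~(a) and~(b).

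For the direction $(2) \Rightarrow (1)$, I would follow the $(2) \Rightarrow (1)$ argument of \Cref{thm:lpapa} almost verbatim, decomposing along the leaves of the condensation. Fix an accepting state $f$, living in its leaf $\ell_f$. Define $\Amc_f$ to be $\Amc$ restricted so that $f$ is the only accepting state, with the extra "visit counter" trick from \Cref{thm:lpapa} (a fresh counter set to $1$ exactly on the ingoing transitions of $f$, $C$ concatenated with $\{1\}$) to force a run of $\Amc_f$ to end with its first and only visit to $f$. The key new point is that, by property~(b), every transition used \emph{before} the first visit to $f$ lies outside the leaf $\ell_f$ (a run cannot enter $\ell_f$ and come back out, since $\ell_f$ is a leaf of the condensation, and once inside $\ell_f$ it must pass through $f$ before revisiting $f$; more carefully, the only way into $\ell_f$ is through $f$ by property~(a)), hence carries the zero vector. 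Therefore $\rho$ of the prefix is $\0$ (ignoring the visit counter), so the Parikh condition is vacuous on the prefix and $\Amc_f$ accepts a \emph{regular} language $U_f$ — formally, strip the counters from $\Amc_f$ to get an NFA recognizing $U_f$, and observe the visit counter can be encoded into states. For the loops, define $\Amc_{f,f}$ with $f$ both initial and sole accepting state, again with the visit counter; since $\ell_f$ is a single leaf and $f$ the only entry point, any accepting run of $\Amc_{f,f}$ stays inside $\ell_f$, so this is a genuine PA and $V_f := L(\Amc_{f,f})$ is Parikh-recognizable. Then the SPBA acceptance condition, together with the fact (established as in \Cref{thm:lpapa}) that exactly one accepting state appears in any accepting run, yields $S_\omega(\Amc) = \bigcup_{f \in F} U_f V_f^\omega$ with $U_f$ regular and $V_f$ Parikh-recognizable.

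The main obstacle I anticipate is the bookkeeping needed to justify that the prefix up to the first visit of $f$ genuinely avoids all non-zero vectors, i.e.\ that before entering the leaf $\ell_f$ a run never uses an in-leaf transition. This requires arguing carefully about the structure of the condensation: a path in $\Amc$ that visits a state of $\ell_f$ and later a state outside $\ell_f$ would contradict $\ell_f$ being a sink of $C(\Amc)$, and by property~(a) the only vertex of $\ell_f$ reachable from outside is $f$ itself, so the "first entry into $\ell_f$" coincides with the "first visit to $f$". Once this is pinned down, the reset semantics of SPBA (counters reset at each accepting state, all accepting states coincide, and each reset-to-reset segment must satisfy $C$) make the decomposition into $U_f \cdot V_f^\omega$ essentially the same combinatorics as in \Cref{thm:lpapa}, with "regular" replacing "Parikh-recognizable" for the prefix exactly because that prefix's extended Parikh image is forced to be $\0$. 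A minor technical point, handled as in \Cref{lem:AIO}, is encoding the auxiliary visit counter away so that $\Amc_f$ and $\Amc_{f,f}$ are literally an NFA and a PA of the original flavour.
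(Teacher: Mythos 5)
Your proposal is correct and follows essentially the same route as the paper's proof: for $(1)\Rightarrow(2)$ glue a zero-labeled copy of the NFA for $U$ in front of the IO-normalized PA for $V$ (via \Cref{lem:AIO}) and close under union, and for $(2)\Rightarrow(1)$ decompose along the unique accepting state $f$ of each leaf exactly as in \Cref{thm:lpapa}, using property (b) to make the prefix part regular. The only cosmetic difference is that you keep the automaton $\Amc_f$ from \Cref{thm:lpapa} and argue $L(\Amc_f)$ is regular (it is the stripped NFA language if $\0 \in C$ and empty otherwise), whereas the paper replaces it outright by an NFA $\Bmc_f$ over the non-leaf states plus $f$ and observes that (a) and (b) force $\0 \in C$ in the backward inclusion — the same observation in different clothing.
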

Observe that property (a) is a stronger property than the one of \Cref{thm:lpapa}, hence, SPBA with this restriction are at most as powerful as those that characterize $\LPAPA$. However, as a side product of the proof we get that property (a) is equivalent to the property of \Cref{thm:lpapa}. Hence, property (b) is mandatory to sufficiently weaken SPBA such that they capture $\LRegPA$. 
In fact, using the notion of IO-normalization, we can re-use most of the ideas in the proof of \Cref{thm:lpapa}.

\begin{proof}[Proof of \Cref{thm:lregpa}]

$(1) \Rightarrow (2)$. We can trivially convert an NFA into an equivalent PA by labeling every transition with $0$ and choosing $C = \{0\}$.
Let $\Amc$ be an arbitrary PA and observe that $\Amc^{IO}$ has only a single SCC by definition. 
Again, we have $L(\Amc)^\omega = S_\omega(\Amc^{IO})$ and the constructions for concatenation and union do not destroy the properties, hence we obtain an SPBA of the desired form. \hfill $\lrcorner$

\medskip
$(2) \Rightarrow (1)$ Let $\Amc = (Q, \Sigma, q_0, \Delta, F, C)$ be an SPBA of dimension $d$ with properties~(a) and (b). Fix $f \in F$ and let \mbox{$\Bmc_f = (Q_f, \Sigma, q_0, \{(p, a, q) \mid (p, a, \vbf, q) \in \Delta, p,q \in Q_f\}, \{f\})$} with $Q_f = \{q \in Q \mid q \text{ appears in a non-leaf SCC of } C(\Amc)\} \cup \{f\}$
be the NFA obtained from $\Amc$ by removing all leaf states except $f$, and removing all labels from the transitions.
Recycling the automaton from \Cref{thm:lpapa}, let \mbox{$\Amc_{f,f} = (Q, \Sigma, f, \Delta_{f}, \{f\}, C \cdot \{1\})$} with $\Delta_{f} = \{p,a,\vbf \cdot 0,q) \mid (p,a,\vbf, q) \in \Delta, q \neq f\} \cup \{(p, a, \vbf \cdot 1, f) \mid (p, a, \vbf, f) \in \Delta\}$.
We claim $S_\omega(\Amc) = \bigcup_{f \in F} L(\Bmc_f) \cdot L(\Amc_f)^\omega$.

\smallskip
$\Rightarrow$ To show $S_\omega(\Amc) = \bigcup_{f \in F} L(\Bmc_f) \cdot L(\Amc_{f,f})^\omega$, let $\alpha \in S_\omega(\Amc)$ with accepting run $r = r_1r_2r_3 \dots$ where $r_i = (p_{i-1}, \alpha_i, \vbf_i, p_i)$, and let $k_1< k_2< \dots$ be the positions of the accepting states in $r$, and consider the partial run $r_1 \dots r_{k_1}$ (if $k_1 = 0$, \ie, the initial state is already accepting, then $r_1 \dots r_{k_1}$ is empty). 

By property (a) we have that $p_{k_1}$ is the first state visited in $r$ that is located in a leaf of $C(\Amc)$. Hence $r'_1 \dots r'_{k_1}$, where $r'_i = (p_{i-1}, \alpha_i, p_i)$, is an accepting run of $\Bmc_{p_{k_1}}$ on $\alpha[1, k_1]$ (in the case $k_1 = 0$ we define $\alpha[1, k_1] = \varepsilon$).

By the same argument as in the proof of \Cref{thm:lpapa} we have $p_{k_i} = p_{k_j}$ for all $i,j \geq 1$, hence $\alpha[k_i + 1, k_{i+1}] \in L(\Amc_{p_{k_1}, p_{k_1}})$, and hence $\alpha \in L(\Bmc_{p_k}) \cdot L(\Amc_{p_{k_1}, p_{k_1}})^\omega$.

\smallskip
$\Leftarrow$ To show $\bigcup_{f \in F} L(\Amc_{f}) \cdot L(\Amc_{f,f})^\omega \subseteq S_\omega(\Amc)$, let $u \in L(\Bmc_{f})$, and $v_1, v_2, \dots \in L(\Amc_{f,f})$ for some $f \in F$. We show that $uv_1v_2 \dots \in S_\omega(\Amc)$.

First observe that properties (a) and (b) enforce that $\0 \in C$, as the accepting state of a leaf of $C(\Amc)$ is visited before a transition labeled with a non-zero can be used.
Let $u = u_1 \dots u_n$ and $s_1 \dots s_n$ with $s_i = (p_{i_1}, u_i, p_i)$ be an accepting run of $\Bmc_f$ on $u$. Define $s'_i = (p_{i_1}, u_i, \0, p_i)$ and observe that $s'_1 \dots s'_n$ is a partial run of $\Amc$ with $\rho(s'_1 \dots s'_n) \in C$ and $p_n = f$ by the observation above.

Again we can very similarly continue the run on $v_1, v_2, \dots$ using the same argument. Hence $uv_1v_2 \dots \in S_\omega(\Amc)$, concluding the proof.
\end{proof}

\section{Conclusion}
%Sometimes you just need a few minutes on the bike to solve a problem you were thinking of more than three years. Then you can rush another paper within two weeks.
%However I do not necessarily recommend this, as this can be super exhausting.
We conclude with an overview of our results shown in \Cref{fig:overview}.
\begin{figure}
\centering
    \begin{tikzpicture}[%
      node distance=27mm,>=Latex,
      initial text="", initial where=below left,
      every state/.style={rectangle,rounded corners,draw=black,thin,fill=black!5,inner sep=1mm,minimum size=6mm},
      every edge/.style={draw=black,thin}
    ]
    \node[state] (cRPA) {complete RPA};
    \node[state,right = 1cm of cRPA] (RPA) {RPA};
    \node[state,above right = 1cm and 0cm of RPA]  (RPBA) {RPBA = LPBA = $\LPAReg$};
    \node[state,below right = 1cm and -1cm of cRPA]  (reg) {BA = $\LRegReg$};
    \node[state,right = 1cm of reg] (regPA) {SPBA $(**)$ = $\LRegPA$};
    \node[state,below right = 1cm and 0cm of RPBA] (PPBA) {PPBA};
    \node[state,right = 1cm of PPBA] (PAPA) {SPBA $(*)$ = $\LPAPA$};
    \node[state,below right = 1cm and -1.15cm of PAPA] (SPBA) {SPBA};

    \path[-{Latex}]
    (cRPA) edge node[above=3mm] {\small\Cref{remark:RPA}} (RPA)
    (RPA) edge node[below right] {\small\Cref{lem:RPARPBA}} (RPBA)
    (RPBA) edge node[above right = -1mm and -1mm] {\small\Cref{cor:LRPBAtoLPPBA}} (PPBA)
    (PPBA) edge node[above=2.5mm] {\small \cite{blindcounter, infiniteOurs}, \cite{infiniteZimmermann}} (PAPA)
    (PAPA) edge node[right] {\small\cite{infiniteOurs}} (SPBA)
    (reg) edge (regPA)
    (regPA) edge[bend right = 15] (PAPA);
    ;

    \path[dotted]
    (cRPA) edge node {$\neq$} node[left=3mm] {\small\Cref{lem:RPAregular}} (reg)
    (RPA) edge node {$\neq$} (reg)
    (RPA) edge node {$\neq$} (regPA)
    (PPBA) edge node {$\neq$} (regPA)
    ;

    \node[above=0mm of RPBA] {\small \Cref{thm:LimitEqualsReach}};
    \node[below=0mm of regPA] {\small \Cref{thm:lregpa}};
    \node[below=0mm of reg] {\small \Cref{thm:buechi}};
    \node[above=0mm of PAPA] {\small \Cref{thm:lpapa}};

    \draw[-{Latex}] (reg) -- ++(-3, 0) -- ++(0,3.2) -- (RPBA);

    \node[align=left, anchor=west] at (-1.5, -3) {\small $(*)$ At most one state $q$ per leaf of $C(\Amc)$ may have ingoing transitions from outside the  leaf, this \\\small\phantom{$(*)$} state $q$ is the only accepting state in the leaf, and there are no accepting states in non-leaves;};
    \node[align=left, anchor=west] at (-1.65, -3.66) {\small $(**)$ and only transitions connecting states in leaves may be labeled with non-zero vectors.};

    \end{tikzpicture}        
    \caption{Overview of our results. Arrows mean strict inclusions and $\neq$ means orthogonal.}
    \label{fig:overview}
\end{figure}
To finalize the picture, we observe the following.
\begin{observation}
\mbox{}
\begin{enumerate}
    \item There are RPA-recognizable $\omega$-languages not contained in $\LRegPA$, for example $\{a^n b^n \mid n \geq 0\} \{a\}^\omega$.
    \item We have $\LRegPA \subseteq \LPAPA$ by definition, and there are $\omega$-languages contained in $\LPAPA$ that are not contained in $\LRegPA$, for example $\{a^n b^n \mid n \geq 1\} \{c^n d^n \mid n \geq 1\}^\omega$.
    \item We have $\LRegReg \subseteq \LRegPA$ by definition, and there are $\omega$-languages contained in $\LRegPA$ that are neither PPBA-recognizable nor $\omega$-regular, as witnessed by many $\omega$-closures of Parikh-recognizable languages (which are trivially contained in $\LRegPA$), for example $\{a^n b^n \mid n \geq 1\}^\omega$ (a formal proof can be found for blind counter machines in \cite{blindcounter}, which are known to be equivalent to PPBA \cite{infiniteOurs}).
\end{enumerate}
\end{observation}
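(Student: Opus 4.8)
The plan is to treat the three items of the observation in roughly increasing order of difficulty. The inclusions stated ``by definition'' --- $\LRegPA \subseteq \LPAPA$ in item~2 and $\LRegReg \subseteq \LRegPA$ in item~3 --- are immediate: every regular language is Parikh-recognizable (label every transition of an NFA with $\0$ and put $C = \{\0\}$), so a presentation $\bigcup_i U_i V_i^\omega$ witnessing membership in the smaller class also witnesses membership in the larger one. The remaining ``positive'' membership claims are short: $\{a^n b^n \mid n \geq 1\}\{c^n d^n \mid n \geq 1\}^\omega \in \LPAPA$ and $\{a^n b^n \mid n \geq 1\}^\omega \in \LRegPA$ simply because $\{a^n b^n\}$ and $\{c^n d^n\}$ are Parikh-recognizable and $\{\varepsilon\}$ is regular. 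That $\{a^n b^n\}^\omega$ is not PPBA-recognizable I would cite from \cite{blindcounter} through the equivalence of PPBA with blind counter machines \cite{infiniteOurs}; that it is then also not $\omega$-regular is free, since a Büchi automaton is a PPBA with $C = \{\0\}$, whence $\LRegReg \subseteq \LPPBA$. So the substance lies in (i) an RPA for $\{a^n b^n \mid n \geq 0\}\{a\}^\omega$ and (ii) the two non-membership claims in $\LRegPA$.

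For (i) I would fork non-deterministically on the first letter. One branch immediately enters an accepting, $a$-looping state that has no $b$-transition; its accepting hit occurs right after the first letter with Parikh value $\0$, so this branch accepts exactly $a^\omega$ and covers the degenerate case $n = 0$. The other branch counts the leading block of $a$s in the first counter and the ensuing block of $b$s in the second, and only \emph{after} at least one $b$ has been read may it move --- on reading an $a$ --- to an accepting, $a$-looping state $q_f$ that again has no $b$-transition. Taking $C = \{(z,z) \mid z \in \Nbb\}$, the only accepting hit on this branch is in $q_f$, where the Parikh value is $(j,k)$ whenever the prefix read so far is $a^j b^k a^t$ with $t \geq 1$; since $q_f$ cannot read a $b$, every word accepted via this branch has the shape $a^j b^k a^\omega$, and the hit forces $j = k$. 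Thus the automaton recognizes exactly $\{a^\omega\} \cup \{a^n b^n a^\omega \mid n \geq 1\}$, as required.

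For (ii), put $X = \{a^n b^n\}\{a\}^\omega$ and suppose $X = \bigcup_{i=1}^m U_i V_i^\omega$ with each $U_i$ regular and each $V_i$ Parikh-recognizable. By the pigeonhole principle some $i$ satisfies $a^n b^n a^\omega \in U_i V_i^\omega$ for infinitely many $n$; fix such an $i$ and such an $n$ exceeding the pumping length of $U_i$, and write $a^n b^n a^\omega = u v_1 v_2 \dots$ with $u \in U_i$ and all $v_j \in V_i \setminus \{\varepsilon\}$. If $|u| \geq n$ then $u$ lies in the regular language $U_i \cap a^* b^* a^*$ and begins with the block $a^n$; pumping inside that block turns $a^n$ into $a^{n'}$ with $n' \neq n$ while leaving the rest of $u$ and all of $v_1 v_2 \dots$ in place, so $a^{n'} b^n a^\omega \in U_i V_i^\omega \subseteq X$, which is absurd. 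Otherwise $|u| < n$, so $u = a^{|u|}$ and $v_1$ begins with at least one $a$; duplicating it, \ie passing to $u v_1 v_1 v_2 \dots \in U_i V_i^\omega$, produces --- after a short case check on whether $v_1$ stays inside the $a$-block, stops inside the $b$-block, or runs past it --- either a word with more than $n$ leading $a$s before a single $b$-block, or a word with two $b$-blocks separated by $a$s; neither lies in $X$, again a contradiction. The same template, run on the words $a^n b^n (c^n d^n)^\omega$, settles $\{a^n b^n\}\{c^n d^n\}^\omega$: absorbing $a^n$ into $u$ lets us pump $U_i$ and break the $|a| = |b|$ balance of the prefix, and otherwise duplicating the $a$-containing factor $v_1$ either unbalances the prefix or drops a stray $a$-block into a region where only $b$s, $c$s and $d$s may occur. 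The only real --- and minor --- obstacle I foresee is precisely this case analysis: locating the factor boundary relative to the $a^n b^n$ prefix and checking that each duplicated word genuinely escapes the target language; the rest is bookkeeping.
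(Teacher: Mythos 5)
Your proposal is correct, and it actually does more than the paper, which states this observation without proof: the only ingredient the paper makes explicit is the one you also use, namely citing \cite{blindcounter} (via the PPBA/blind-counter equivalence of \cite{infiniteOurs}) for the non-PPBA-recognizability of $\{a^nb^n \mid n\geq 1\}^\omega$, together with the remark that the witnesses are trivially in $\LPAPA$ resp.\ $\LRegPA$. Your additions check out: the two-branch RPA for $\{a^nb^n \mid n\geq 0\}\{a\}^\omega$ is sound precisely because RPA need not be complete, so the accepting $a$-looping states with no $b$-transitions kill all unwanted continuations, and freezing the counters at value $(j,k)$ forces $j=k$ at the unique place a hit can occur; the deduction ``not PPBA-recognizable $\Rightarrow$ not $\omega$-regular'' is legitimate since every BA becomes an equivalent PPBA with all labels $\0$ and $C=\{\0\}$. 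The core of your contribution, the $\LRegPA$-non-membership arguments, is also right and is the natural one: since only $U_i$ is regular, you pump $u$ when it swallows the whole $a$-block (choosing $n$ larger than the pumping constant so the pumped factor lies inside $a^n$), and otherwise you exploit that $V_i^\omega$ is closed under repeating a factor and duplicate $v_1$, which either unbalances the $a$/$b$-prefix or creates an $a$ after a $b$ (resp.\ after a $b$, $c$ or $d$), leaving the target language in every case. One cosmetic point: the detour through $U_i\cap a^*b^*a^*$ is unnecessary (and if you do use it, the pumping constant should be that of the intersection, not of $U_i$); pumping $U_i$ directly with a decomposition confined to the first $n$ letters is cleaner and is all you need, since infinitely many admissible $n$ are available.
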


Finally, we recall that deterministic $\omega$-regular languages are characterized as regular arrow-languages $\vec{L}$, where $\vec{L} = \{\alpha \mid \alpha[1,i] \in L \text{ for infinitely many }i\}$ \cite{thomasinfinite}. This characterization can easily be adapted to show that deterministic PPBA-recognizable $\omega$-languages are captured by arrows of deterministic Parikh-recognizable languages. We conjecture that PPBA-recognizable $\omega$-languages are captured by arrows of Parikh-recognizable languages yielding a similar characterization of $\LPPBA$.

\newpage
\bibliographystyle{plain}
\bibliography{lit}

\end{document}